\documentclass[journal]{IEEEtran}
\IEEEoverridecommandlockouts

\usepackage{graphicx}
\usepackage{cite}
\usepackage{amsfonts} 
\usepackage{amsmath}  
\usepackage{bm}
\usepackage{psfrag}
\usepackage{amsbsy}
\usepackage{amssymb}
\usepackage{mathrsfs}
\usepackage{stfloats}
\usepackage{dsfont}
\usepackage{setspace}
\usepackage{array}
\usepackage{bm}
\usepackage{bbm}
\usepackage{mathabx}
\usepackage{algorithmic}
\usepackage{algorithm}
\usepackage{glossaries}
\usepackage{color}
\usepackage{amsthm}
\usepackage{lettrine}

\usepackage[textwidth=30mm]{todonotes}

\title{Energy Efficiency Maximization of MIMO Systems through Reconfigurable Holographic Beamforming}
\author{
	Robert Kuku Fotock, {\em Student Member, IEEE},  Alessio Zappone,\\ {\em Fellow, IEEE}, Agbotiname Lucky Imoize, {\em Senior Member, IEEE}, Marco Di Renzo, {\em Fellow, IEEE}, 
	\thanks{R. K. Fotock is with the University of Cassino and Southern Lazio. A. Zappone is with the University of Cassino and Southern Lazio and with CNIT, Italy (\{robertkuku.fotock, alessio.zappone\}@unicas.it). A. L. Imoize is with CNIT and with Politecnico di Torino, Italy (agbotiname.imoize@polito.it). M. Di Renzo is with Universit\'e Paris-Saclay, CNRS, CentraleSup\'elec, Laboratoire des Signaux et Syst\`emes, France (marco.di-renzo@universite-paris-saclay.fr). 
}}

\newcommand{\eeq}{\end{equation}}

\newcommand{\br}{\bm{r}}

\newcommand{\bs}{\bm{s}}

\newcommand{\bq}{\bm{q}}

\newcommand{\bzero}{\bm{0}}
\newcommand{\bM}{\bm{M}}

\newcommand{\bA}{\bm{A}}

\newcommand{\bC}{\bm{C}}
\newcommand{\bX}{\bm{X}}
\newcommand{\bh}{\bm{h}}
\newcommand{\bH}{\bm{H}}

\newcommand{\bc}{\bm{c}}
\newcommand{\bS}{\bm{S}}
\newcommand{\bgamma}{\bm{\gamma}}

\newcommand{\bZ}{\bm{Z}}
\newcommand{\bU}{\bm{U}}
\newcommand{\bGamma}{\bm{\Gamma}}
\newcommand{\bB}{\bm{B}}
\newcommand{\bE}{\bm{E}}
\newcommand{\bV}{\bm{V}}

\newcommand{\bG}{\bm{G}}

\newcommand{\bu}{\bm{u}}

\newcommand{\bx}{\bm{x}}

\newcommand{\bg}{\bm{g}}
\newcommand{\bn}{\bm{n}}
\newcommand{\bbm}{\bm{m}}
\newcommand{\bv}{\bm{v}}

\newcommand{\bQ}{\bm{Q}}

\newcommand{\bR}{\bm{R}}
\newcommand{\bF}{\bm{F}}

\newcommand{\bI}{\bm{I}}

\newcommand{\bD}{\bm{D}}
\newcommand{\by}{\bm{y}}
\newcommand{\ds}{\displaystyle}

\hyphenation{E-let-tro-ni-ca}

\newcommand{\beq}{\begin{equation}}


      %
\newcommand{\tr}      {{\mathrm{tr}}}    











\newtheorem{remark}{Remark}

\newtheorem{proposition}{Proposition}

\begin{document}

\maketitle

	\begin{abstract}
		
This study considers a point-to-point wireless link, in which both the transmitter and receiver are equipped with multiple antennas. In addition, two reconfigurable metasurfaces are deployed, one in the immediate vicinity of the transmit antenna array, and one in the immediate vicinity of the receive antenna array. The resulting architecture implements a holographic beamforming structure at both the transmitter and receiver. In this scenario, the system energy efficiency is optimized with respect to the transmit covariance matrix, and the reflection matrices of the two metasurfaces. A low-complexity algorithm is developed, which is guaranteed to converge to a first-order optimal point of the energy efficiency maximization problem. Moreover, closed-form expressions are derived for the metasurface matrices in the special case of single-antenna or single-stream transmission. The two metasurfaces are considered to be nearly-passive and subject to global reflection constraints. A numerical performance analysis is conducted to assess the performance of the proposed optimization methods, showing, in particular, that the use of holographic beamforming by metasurfaces can provide significant energy efficiency gains compared to fully digital beamforming architectures, even when the latter achieve substantial multiplexing gains. 

	\end{abstract}

\section{Introduction}
\lettrine[nindent=0.1em,lines=2]{E}{nergy} efficiency (EE) continues to be a major performance requirement of future wireless communications, especially considering that 5G did not achieve the desired 2000x EE increase \cite{DavidEE}, mainly due to the use of a large number of digital antennas, and the resulting high static power consumption \cite{HuaweiReport}. In order to face this challenge, reconfigurable metasurfaces have emerged as one of the main technologies for the sixth generation (6G) of wireless networks, due to their ability to provide a large amount of degrees of freedom with limited power consumption\cite{huang2019reconfigurable,di2020smart}. Reconfigurable metasurfaces are nearly-passive devices, fully operating in the analog domain, i.e. they require no dedicated amplifier and no conversion between the analog and digital domains. By requiring only a limited amount of static energy to enable the reconfiguration of the reflecting elements, metasurfaces can provide high beamforming gains, with a low energy consumption, and their use has been considered in many applications for 6G, analyzing the emerging trends, recent advancements, potential opportunities and challenges that could impact the integration of reconfigurable metasurfaces into 6G wireless networks \cite{strinati2021reconfigurable,zhang2025smart,katwe2024overview}. 

\subsection{Literature review}
Reconfigurable metasurfaces have been first deployed far from the transceivers, as an efficient way of creating new optical paths from transmitters to receivers. In this configuration, they are called reconfigurable intelligent surfaces (RISs), and provide a measure of control of the propagation environment. On the other hand, metasurfaces have been considered also for deployment in the vicinity of the wireless transceivers, with several application that have emerged in the last years:
\begin{itemize}
\item[(a)] a first use of metasurfaces at the transmit side, has been the implementation of \emph{index modulation} techniques. In this context, data can be encoded into the activation pattern of the metasurface elements \cite{BasarRIS}. 
\item[(b)] another example of metasurfaces employed at the transmit side is the use of \emph{dynamic metasurfaces}, which have been proposed as an efficient approach for beam focusing. This has been shown to provide extremely high localization accuracies and efficient energy usage in multiple-antenna communications \cite{Yang2025,Zhang2022}. 
\item[(c)] A third application of metasurfaces deployed at wireless transceiver is their use to implement holographic MIMO (HMIMO) \cite{huang2020holographic}. 
\end{itemize}
HMIMO is traditionally defined as a dense panel of radiating elements spaced at subwavelength distances, which are connected to one or few transmit digital chains.  \cite{pizzo2020spatially}. Usually, the panel is implemented through a reconfigurable metasurface, which is usually referred to as a reconfigurable holographic surface (RHS) \cite{Deng21,gong2023holographic}, which is wired to the feeders.  The advantage of beamforming through RHSs is that, compared to active antenna arrays, RHSs have a lower energy expenditure and cost \cite{zeng2022reconfigurable}, and, thus, can be equipped with a larger number of antenna elements~\cite{you2022energy,guo2023green}. RHSs are being considered an energy-efficient evolution of the all-digital MIMO technology, and have been demonstrated to achieve low-cost and energy-efficient wireless communication and sensing~\cite{deng2023reconfigurable,an2023tutorial,di2025reconfigurable}. As mentioned, RHSs are usually wired to the transceiver feeders, i.e. they are an integral part of the transceiver. In this configuration, the transmit signal propagates to the RHS through a waveguide and is then reflected to the receiver. In~\cite{wei2022multi}, the downlink of multi-user MIMO communications is studied. The work extends an EM-compliant channel model to multi-user setups expressed in the wavenumber domain employing the Fourier plane wave approximation. In \cite{li2024energy}, a BS equipped with a switch-controlled RHS-aided beamforming architecture is considered, and the EE maximization problem is tackled via alternating optimization of the holographic beamformer, the digital beamformer, and the transmit power. A near-field channel model was proposed in~\cite{gong2024holographic}, and the capacity limit of a point-to-point MIMO system with holographic beamforming is investigated. In  \cite{AnSIM2023}, the use of stacked intelligent metasurfaces (SIMs) has been explored for holographic beamforming. In this context, a MIMO system is considered, and the SIM is designed to establish a desired equivalent MIMO channel. In \cite{Wan21}, RHSs are considered for THz-based networks, developing a model and analyzing the resulting performance. In~\cite{joy2024reconfigurable}, an electromagnetic framework for designing a holographic surface is developed. The performance and power consumption of the framework were compared to those of passive metasurfaces and MIMO digital antenna arrays.  In~\cite{qian2024spectral}, channel models for holographic MIMO communications are developed, and the system spectral efficiency is analyzed. The use of an RHS has been considered also in networks aided by uncrewed aerial vehicles (UAVs), as a way of aiding the communications and performing energy harvesting to power the UAV \cite{Song25}. Artificial intelligence has also been considered for power allocation in cell-free networks employing holographic beamforming  \cite{adhikary2024power}. The study in~\cite{bahanshal2024holographic} focuses on the number of antennas required to achieve energy efficiency, in a wireless link employing holographic beamforming by RHSs. In \cite{Jalali25}, the weighted sum-rate maximization problem is tackled in the downlink of a multi-user network in which a BS with a uniform linear array serves single-antenna users through multiple WsRHSs.

While all previous works consider RHSs that are wired to the transceiver feeders, another possible architecture is to employ a wireless connection between the transceiver feeders and the metasurfaces. In this configuration, the RHS is simply deployed in the vicinity of the transceiver to reflect/refract the signal transmitted by the digital antennas, but is not wired to the transceiver hardware. Instead, the signal simply propagates via wireless connection from the transmit antennas to the RHS or from the RHS to the receive antennas. The peculiarity of this setup is that the RHS is deployed closer to the transceiver antenna array than the Fraunhofer's distance, and thus a spherical wave propagation model must be considered, in place of the traditional plane wave propagation model. This architecture was considered in \cite{Zheng21,You2021,Mei2022Multi,zappone2022energy,zeng2023reconfigurable,Interdonato24,Mishra24} and has been referred to as holographic RIS \cite{zappone2022energy}, reconfigurable intelligent BS \cite{Interdonato24,Mishra24}, or reconfigurable refractive surfaces \cite{zeng2023reconfigurable}. In the following, in order to distinguish it from the more traditional, wired RHS, we will use the term Wireless RHS (WsRHS).  In \cite{Zheng21}, two WsRHSs are deployed at the transmitter and receiver, respectively, and the minimum signal-to-interference-plus-noise-ratio (SINR) in a multi-user MISO system is maximized with respect to the transmit beamforming and metasurface coefficients. An alternating maximization algorithm with respect to each metasurface reflection matrix and transmit beamforming is developed. In \cite{You2021}, the received power is maximized in a system in which two WsRHSs are deployed to assist the communication between a single-antenna transmitter and a single-antenna receiver. Suboptimal optimization techniques are developed based on alternating optimization or assuming that the channel between the two metasurfaces has rank one. A similar model is considered in \cite{zappone2022energy}, but the focus of the paper is on the maximization of the EE. A novel method based on sequential programming is developed, which is shown to perform similarly as alternating optimization, but with lower computational complexity. In \cite{Mei2022Multi}, a network aided by multiple metasurfaces is considered, and optimal metasurface selection and beam routing are performed. In~\cite{zeng2023reconfigurable}, a single WsRHS is placed in the near-field of the transmitter, and its size is optimized for EE maximization.

\subsection{Contributions}
From the above literature review, it emerges that the vast majority of the previous works on HMIMO and RHSs focus on the analysis or optimization of the network achievable rate and throughput, neglecting the system EE. Moreover, all previously cited works assume that multiple antenna are deployed only at the base station, while single-antenna mobile users are considered. EE optimization in RHS-aided MIMO networks is still an open problem, even for single-link systems. Indeed, only a few studies have focused on EE optimization in MIMO systems aided by metasurfaces deployed far from the transceivers. In \cite{ZapTWC2021},  upper- and lower-bounds on the EE of a MIMO link are optimized assuming a single-stream transmission. In \cite{You2021b}, the trade-off between EE and spectral efficiency is studied, employing the weighted minimum mean squared error method to tackle the problem. In \cite{Sharma2022}, the EE of a MIMO link employing simultaneous wireless information and power transfer (SWIPT) is considered. However, \cite{Sharma2022} does not optimize the EE, which is defined as the ratio between rate and power consumption, addressing instead the simpler problem of minimizing the difference between rate and power. In \cite{Soleymani24}, a MIMO network with finite block-length transmissions is considered, and the network EE is optimized through sequential fractional programming. At present, no work considers the optimization of the EE in an RHS-based MIMO system. This work aims at closing this gap, considering a MIMO link with multiple antennas equipped at both the transmitter and receiver, and two WsRHSs are deployed in the near-field region of the transmitter and receiver antenna arrays, respectively. Both WsRHS are not wired to the transceiver hardware and are considered to be nearly-passive devices, i.e. no amplifiers are equipped on the WsRHSs. The transmit signal propagates from the digital transmit antenna array to the transmit WsRHS, which reflects it to the receive WsRHS, which, finally, reflects it to the receive antenna array. In this context, the following contributions are made:
\begin{itemize}
\item A novel algorithm is developed to optimize the reflection matrices of the two WsRHSs and the transmit covariance matrix of the transmitter for EE maximization. The proposed algorithm leverages the sequential fractional programming method, coupled with a new reformulation of the optimization problem that deals with unit-rank constraints without resorting to the semidefinite relaxation method. As a result, the proposed method monotonically improves the EE value and converges to a first-order optimal point of the EE maximization problem. 
\item In the special case of a single transmit and receive antennas, the optimal WsRHS matrices and transmit power are found in closed-form, which dispenses with the need to running iterative algorithms and allows for a closed-form analysis of the performance of the optimized system. Closed-form expressions of the optimal WsRHS matrices are also found in the special case in which multiple antennas are employed at both the transmitter and receiver, but a single data stream is employed. 
\item The problems are solved subject to a power constraint at the transmitter and global reflection constraints at the two WsRHSs. Metasurfaces with global reflection constraints are a recent type of metasurfaces which generalize traditional metasurfaces with local reflection constraints \cite{DiRenzoGlobal}. Specifically, while in metasurface with local reflection constraints each reflection coefficient is separately constrained to have modulus not greater than one, in a metasurface with global reflection constraints a single reflection constraint is enforced on all of the reflection coefficients, requiring that the total power reflected by the metasurface is not greater than the total power that impinges on it.
\item Numerical results are provided to assess the performance of the proposed system, in terms of EE and capacity. Interestingly, the analysis reveals that the use of WsRHSs allows a significant reduction of the number of the digital antennas deployed at the transmitter and receiver, leading to significant EE gains and satisfactory capacity levels. 
\end{itemize}
The rest of the paper is organized as follows. Section \ref{Sec:SysModel} introduces the system model and formulates the EE maximization problem. Section \ref{Sec:SISO} solves the EE maximization problem considering that both the transmitter and the receiver are equipped with a single antenna. Section \ref{Sec:MIMO_ST} and \ref{Sec:MIMO_MT} develop resource allocation algorithms for the multiple-antenna case, considering single-stream and multi-stream transmissions, respectively. Section \ref{Sec:NUM_ANA} presents the numerical analysis, while Section \ref{Sec:Concl} concludes the paper.

\textbf{Notation}: Scalars, column vectors, and matrices are denoted by lowercase, boldface lowercase, and boldface uppercase letters, e.g. $x$, $\mathbf{x}$, $\mathbf{X}$.  $\text{diag}(\cdot)$ denotes a diagonal matrix, $\tr(\mathbf{S})$ and $\mathbf{S}^{-1}$ denote the trace and inverse of a square matrix. $\mathbf{S} \succeq \mathbf{0}$ indicates that $\mathbf{S}$ is positive semi-definite. $\mathbf{I}_M$ denotes an identity matrix of size $M \times M$. The distribution of a circularly symmetric complex Gaussian random variable $x$ with a mean of $\mu$ and a variance of $\sigma^2$ is represented as $x \sim \mathcal{CN}(\mu, \sigma^2)$.

\section{System Model}\label{Sec:SysModel}
Let us consider a MIMO link with $N_{T}$ transmit antennas and $N_{R}$ receive antennas arranged in a rectangular array. A reconfigurable metasurface with $M_{T}$ elements and reflection matrix $\bGamma_{T}=\text{diag}(\gamma_{T,1},\ldots,\gamma_{T,M_T})=\text{diag}(\bgamma_{T})$
is deployed in the immediate vicinity of the transmitter and henceforth called transmit WsRHS. Similarly, another reconfigurable metasurface with $M_R$ elements and reflection matrix $\bGamma_{R}=\text{diag}(\gamma_{R,1},\ldots,\gamma_{R,M_{R}})=\text{diag}(\bgamma_{R})$ is deployed in the immediate vicinity of the receiver and henceforth called receive WsRHS. We assume that both WsRHSs are nearly-passive, i.e. they are not equipped with any radio-frequency amplifier.  

Then, let us denote by $\bC$, $\bH$, and $\bG$ the $M_{R}\times M_{T}$ channel between the two WsRHSs, the $M_T\times N_T$ channel between the transmitter and the WsRHS placed in its vicinity, and the $N_{R}\times M_{R}$ channel between the receiver and the WsRHS placed in its vicinity. Finally, let $\bs$ be the $N_{T}\times 1$ vector of transmit information symbols, with covariance matrix $\bQ=\mathbb{E}[\bs\bs^{H}]$ and subject to the power constraint $\tr(\bQ)\leq P_{max}$, wherein $P_{max}$ is the maximum transmit power of the transmit amplifier. The system model is depicted in Fig.~\ref{fig:systmodel}.

\begin{figure}[!h]
	\centering
	\includegraphics[width=0.3\textwidth]{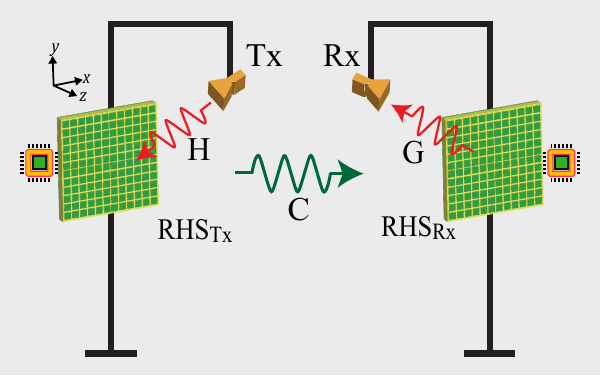}\caption{Holographic MIMO system by WsRHSs} \label{fig:systmodel}
\end{figure}

\subsection{Channel model}
In the following, different channel models are considered for the channels $\bH$ and $\bG$, which are near-field channels, and the channel $\bC$, which is a far-field channel. As for the channels $\bH$, and $\bG$, each entry of the matrices $\bH$ and $\bG$ follow the (deterministic) spherical wave equation, i.e. the $(n,m)$ element of $\bH$, with $n=1,\ldots,M_{T}$ and $m=1,\ldots,N_{T}$, is
\begin{align}
H(n,m)&=\frac{\lambda}{4\pi}\sqrt{\alpha_{n,m}^{RHS_T}\alpha_{n,m}^{Tx}}\frac{e^{-j(2\pi/\lambda)\|\br_{n}^{RHS_T}-\br_{m}^{Tx}\|}}{\|\br^{RHS_T}-\br^{Tx}\|}
\end{align}
wherein $\br_{n}^{RHS_T}$ and $\br_{m}^{Rx}$ are the vectors defining the 3D positions of the $n$-th element of the transmit WsRHS and $m$-th BS antenna, $ \alpha_{m,n}^{Tx}$ denotes the transmit gain of the $m$-th transmit antenna towards the $n$-the element of the transmit WsRHS, and $\alpha_{m,n}^{RHS_T}$ denotes the receive gain of the $n$-th element of the transmit WsRHS, from the $m$-th transmit antenna. The gains $ \alpha_{m,n}^{BS}$ and  $\alpha_{m,n}^{RHS}$ are expressed as 
\begin{align}
\alpha_{n,m}^{Tx}&=\frac{4\pi}{\lambda^{2}}\Delta_{h,Tx}\Delta_{v,Tx}\rho_{n,m}^{Tx}\\
\alpha_{n,m}^{RHS_T}&=\frac{4\pi}{\lambda^{2}}\Delta_{h,RHS_T} \Delta_{v,RHS_T}\rho_{n,m}^{RHS_T}\;,
\end{align}
with $\Delta_{h,Tx}$ and $\Delta_{v,Tx}$ the horizontal and vertical spacing of the transmit antennas, $\Delta_{h,RHS_T}$ and $\Delta_{v,RHS_T}$ the horizontal and vertical spacing of the elements of the transmit WsRHS, $\rho_{n,m}^{Tx}$ and $\rho_{n,m}^{RHS_T}$ the standard directivity factors of the transmitter and transmit WsRHS, respectively \cite{Esposti22,Feng24}. Similarly, the $(n,m)$ element of $\bG$, with $n=1,\ldots,N_{R}$ and $m=1,\ldots,M_{R}$ is expressed as
\begin{align}
G(n,m)&=\frac{\lambda}{4\pi}\sqrt{\alpha_{n,m}^{RHS_R}\alpha_{n,m}^{Rx}}\frac{e^{-j(2\pi/\lambda)\|\br_{n}^{Rx}-\br_{m}^{RHS_R}\|}}{\|\br_{n}^{Rx}-\br_{m}^{RHS_R}\|}\;,
\end{align}
wherein $\br_{n}^{RHS_R}$ and $\br_{m}^{Rx}$ are the vectors defining the 3D positions of the $n$-th element of the receive WsRHS and $m$-th BS antenna, $\alpha_{m,n}^{Rx}$ denotes the receive gain of the $n$-th receive antenna from the $m$-th element of the receive WsRHS, and $\alpha_{m,n}^{RHS,R}$ denotes the transmit gain of the $m$-th element of the receive WsRHS towards the $n$-th receive antenna. The gains $ \alpha_{m,n}^{Rx}$ and $\alpha_{m,n}^{RHS_R}$ are expressed as 
\begin{align}
\alpha_{n,m}^{Rx}&=\frac{4\pi}{\lambda^{2}}\Delta_{h,Rx} \Delta_{v,Rx}\rho_{n,m}^{Rx}\\
\alpha_{n,m}^{RHS_R}&=\frac{4\pi}{\lambda^{2}}\Delta_{h,RHS_R} \Delta_{v,RHS_R}\rho_{n,m}^{RHS_R}\;,
\end{align}
with $\Delta_{h,Rx}$ and $\Delta_{v,Rx}$ the horizontal and vertical spacing of the receive antennas, $\Delta_{h,RHS_R}$ and $\Delta_{v,RHS_R}$ the horizontal and vertical spacing of the elements of the receive WsRHS, and $\rho_{n,m}^{Rx}$ and $\rho_{n,m}^{RHS_R}$ the standard directivity factors of the receiver and receive WsRHS. Finally, the two WsRHSs are assumed to be far from each other, so that signal propagation between them can be modeled following the traditional far-field model, subject to block flat fading. In particular, we consider that each entry of $\bC$ is a realization of a Rice random variable with factor $K$, multiplied by the path-loss coefficient 
\beq
\text{PL}=\text{PL}_{0}\left(\frac{d}{d_{0}}\right)^{-\nu}\;,
\eeq
with $\text{PL}_{0}$ the path-loss at the reference distance $d_{0}$, $d$ the distance between the two WsRHSs, and $\nu$ the path-loss exponent.

\subsection{Problem formulation} 
Given the notation defined above, the input and output signals at the transmit WsRHS are
\begin{align}
\by_{in,T}&=\bH\bQ^{1/2}\bs\;,\;\by_{out,T}=\bGamma_{T}\bH\bQ^{1/2}\bs\;,
\end{align}
the input and output signals at the receive WsRHS are 
\begin{align}
\by_{in,R}&=\bC\bGamma_{T}\bH\bQ^{1/2}\bs\;,\;\by_{out,R}=\bGamma_{R}\bC\bGamma_{T}\bH\bQ^{1/2}\bs\;,
\end{align}
while the signal received at the final destination is written as $\by=\bG\bGamma_{R}\bC\bGamma_{T}\bH\bQ^{1/2}\bs+\bn$, with $\bn\sim{\cal CN}(\bzero,\sigma^{2}\bI_{N_{R}})$ the thermal noise at the receiver. Thus, the system capacity can be written as
\begin{equation}
C=B\log_{2}\left|\bI_{N_{R}}\!+\!\frac{1}{\sigma^{2}}\bG\bGamma_{R}\bC\bGamma_{T}\bH\bQ\bH^{H}\bGamma_{T}^{H}\bC^{H}\bGamma_{R}^{H}\bG^{H}\right|\;.
\end{equation}
The input and output power at the transmit and receive WsRHSs are equal to
\begin{align}
P_{in,T}&=\tr(\bH\bQ\bH^{H})\;,\\
P_{out,T}&=\tr(\bGamma_{T}\bH\bQ\bH^{H}\bGamma_{T}^{H})\;,\\
P_{in,R}&=\tr(\bC\bGamma_{T}\bH\bQ\bH^{H}\bGamma_{T}^{H}\bC^{H})\;,\\
P_{out,R}&=\tr(\bGamma_{R}\bC\bGamma_{T}\bH\bQ\bH^{H}\bGamma_{T}^{H}\bC^{H}\bGamma_{R}^{H})\;,
\end{align}
We consider WsRHSs with global reflection constraints, i.e., at each WsRHS, the input power must be lower than the output power, namely 
$P_{out,R}\leq P_{in,R}$ and $P_{out,T}\leq P_{in,T}$. Hence, the two WsRHSs do not consume any radio-frequency power, but only static power for reconfiguring the reflecting elements. Thus, the total system power consumption is\footnote{We consider the typical scenario in which the metasurface static power consumption does not depend on the values of the reflection coefficients.}
\begin{align}
P_{t}&=\mu \tr(\bQ)+M_{T}P_{T}^{(s)}+M_{R}P_{R}^{(s)}+N_{T}P_{T}^{(a)}+N_{R}P_{R}^{(a)}+\notag\\
&+P_{RHS,0}+P_{0}=\mu \tr(\bQ)+P_{c},
\end{align}
wherein $P_{T}^{(s)}$ and $P_{R}^{(s)}$ are the static power consumption of each element of the transmit and receive WsRHS, $P_{T}^{(a)}$ and $P_{R}^{(a)}$ are the static power consumption of each digital radio-frequency chain of the transmitter and receiver, $P_{RHS,0}$ is the rest of the static power consumed by the two WsRHSs, $P_{0}$ is the static power consumed in the rest of the system, while $P_{c}=M_{T}P_{T}^{(s)}+M_{R}P_{R}^{(s)}+N_{T}P_{T}^{(a)}+N_{R}P_{R}^{(a)}+P_{RHS,0}+P_{0}$. Thus, the EE maximization problem can be stated as 
\begin{subequations}\label{Prob:EE_MIMO}
\begin{align}
&\ds\max_{\footnotesize \bQ,\bGamma_{T},\bGamma_{R}}\frac{C(\bGamma_{T},\bGamma_{R},\bQ)}{P_{t}(\bGamma_{T},\bGamma_{R},\bQ)}\label{Prob:aEE_MIMO}\\
&\;\text{s.t.}\;\tr(\bQ)\leq P_{max}\label{Prob:bEE_MIMO}\\
&\;\quad\;\;P_{out,T}(\bQ,\bGamma_{T})-P_{in,T}(\bQ)\leq 0\label{Prob:cEE_MIMO}\\
&\;\quad\;\;P_{out,R}(\bQ,\bGamma_{T},\bGamma_{R})-P_{in,R}(\bQ,\bGamma_{T})\leq 0 \label{Prob:dEE_MIMO}
\end{align}
\end{subequations}
In the following, Problem \eqref{Prob:EE_MIMO} will be tackled in three different scenarios, with increasing degree of complexity. Specifically:

1)  Section \ref{Sec:SISO} addresses Problem \eqref{Prob:EE_MIMO} in the special case in which a single digital antenna is deployed at the transmitter and receiver, i.e.  $N_{R}=N_{T}=1$. For this case, closed-form solutions for $\bGamma_{T}$, $\bGamma_{R}$, and the transmit power are derived. 

2) Section \ref{Sec:MIMO_ST} addresses Problem \eqref{Prob:EE_MIMO} in the special case in which $N_{R}>1$, $N_{T}>1$, and $\text{rank}(\bQ)=1$, i.e. multiple digital antennas are used at both the transmitter and receiver, but a single data-stream transmission is used. For this case, closed-form solutions for $\bGamma_{T}$ and $\bGamma_{R}$ will be derived, whereas an iterative algorithm will be required to optimize the beamforming strategy. 

3) Section \ref{Sec:MIMO_MT} addresses Problem \eqref{Prob:EE_MIMO} in the general case in which $N_{R}>1$, $N_{T}>1$, and $\text{rank}(\bQ)>1$. In this case, an iterative algorithm will be developed to optimize all of the system radio resources. 
 
\section{Single-antenna, single-stream scenario}\label{Sec:SISO}
Assume $N_{R}=N_{T}=1$. Then, $\tr(\bQ)=p$, with $p$ the transmit power subject to the constraint $p\leq P_{max}$. Let us denote by $\bh$ the $M_{T}\times 1$ channel from the transmit antenna to the transmit WsRHS and by $\bg^{H}$ the $1\times M_{R}$ channel from the receive WsRHS to the receive antenna. Then, the EE expression simplifies to
\begin{equation}
\text{EE}=\frac{B\log_{2}\left(1+\ds\frac{p}{\sigma^{2}}\left|\bg^{H}\bGamma_{R}\bC\bGamma_{T}\bh\right|^{2}\right)}{\mu p+P_{c}}\;,
\end{equation}
whereas $P_{in,T}=p\|\bh\|^{2}$, $P_{out,T}=p\|\bGamma_{T}\bh\|^{2}$, $P_{in,R}=p\|\bC\bGamma_{T}\bh\|^{2}$, $P_{out,R}=\|\bGamma_{R}\bC\bGamma_{T}\bh\|^{2}$. Hence, the EE maximization problem becomes
\begin{subequations}\label{Prob:EE_SISO}
\begin{align}
&\ds\max_{\footnotesize p,\bGamma_{T},\bGamma_{R}}\frac{B\log_{2}\left(1+\ds\frac{p}{\sigma^{2}}\left|\bg^{H}\bGamma_{R}\bC\bGamma_{T}\bh\right|^{2}\right)}{\mu p+P_{c,RIS}+P_{c}}\label{Prob:EE_SISOa}\\
&\;\text{s.t.}\;0\leq p\leq P_{max}\label{Prob:EE_SISOb}\\
&\;\quad\;\;\|\bGamma_{T}\bh\|^{2}\leq \|\bh\|^{2}\;,\;\|\bGamma_{R}\bC\bGamma_{T}\bh\|^{2}\leq \|\bC\bGamma_{T}\bh\|^{2}\label{Prob:EE_SISOd}
\end{align}
\end{subequations}
It is useful to observe that, for any fixed $p$, the optimization of $\bGamma_{T}$ and $\bGamma_{R}$ reduces to optimizing the term $\left|\bg^{H}\bGamma_{R}\bC\bGamma_{T}\bh\right|^{2}$, subject to constraints that do not depend on $p$. Thus, the optimal $(\bGamma_{T}$,  $\bGamma_{R})$ will not depend on the optimal $p$ and so the optimization of the pair $(\bGamma_{T}$, $\bGamma_{R})$ can be decoupled from the optimization of the transmit power $p$. These two subproblems will be tackled in the next two subsections. 

\subsection{Optimization of $\bGamma_{T}$ and $\bGamma_{R}$}\label{Sec:RHS_SISO}
For any given $p$, the problem reduces to
\begin{subequations}\label{Prob:EE_SISO_H}
\begin{align}
&\ds\max_{\footnotesize \bGamma_{T},\bGamma_{R}}\left|\bg^{H}\bGamma_{R}\bC\bGamma_{T}\bh\right|^{2}\label{Prob:EE_SISO_Ha}\\
&\;\text{s.t.}\;\|\bGamma_{T}\bh\|^{2}\leq \|\bh\|^{2}\label{Prob:EE_SISO_Hb}\\
&\;\quad\;\;\|\bGamma_{R}\bC\bGamma_{T}\bh\|^{2}\leq \|\bC\bGamma_{T}\bh\|^{2}\label{Prob:EE_SISO_Hc}
\end{align}
\end{subequations}
Defining the new variables $\bx=\bC\bGamma_{T}\bh$ and $\by=\bGamma_{R}\bx$, and assuming that the matrix $\bC$ has a left inverse\footnote{This happens with probability one whenever $M_T\leq M_{R}$}, i.e. $\bC^{+}\bC=\bI_{N_{MT}}$, then Problem \eqref{Prob:EE_SISO_H} can be restated as
\begin{subequations}\label{Prob:EE_SISO_H2}
\begin{align}
&\ds\max_{\footnotesize \bx,\by}\left|\bg^{H}\by\right|^{2}\label{Prob:EE_SISO_H2a}\\
&\;\text{s.t.}\;\|\bC^{+}\bx\|^{2}\leq \|\bh\|^{2}\label{Prob:EE_SISO_H2b}\\
&\;\quad\;\;\|\by\|^{2}\leq \|\bx\|^{2}\label{Prob:EE_SISO_H2c}
\end{align}
\end{subequations}
\begin{proposition}\label{Prop:OptSISO}
The optimal solution of Problem \eqref{Prob:EE_SISO_H2} is 
\begin{align}
\bx&=\|\bh\|\sqrt{\lambda_{max}}\bu_{max}\label{Eq:Opt_x}\\
\by&=\|\bh\|\sqrt{\lambda_{max}}\frac{\bg}{\|\bg\|}\;\label{Eq:Opt_y},
\end{align}
wherein $\lambda_{max}$ is the maximum eigenvalue of $\bC\bC^{H}$ and $\bu_{max}$ the corresponding unit-norm eigenvector. 
\end{proposition}
\begin{proof}
By Cauchy-Schwartz inequality, \eqref{Prob:EE_SISO_H2a} is maximized when $\by$ is aligned with $g$, i.e. $\by=\frac{\|\by\|}{\|\bg\|}\bg$, with $\|\by\|$ still to be determined. To this end, we observe that $\|\by\|$ should be as large as possible, and so, from \eqref{Prob:EE_SISO_H2c}, it must hold $\|\by\|=\|\bx\|$ and we should make $\|\bx\|$ as large as possible. To this end, we rewrite \eqref{Prob:EE_SISO_H2b} as
\begin{equation}
\|\bC^{+}\bx\|^{2}=\|\bx\|^{2}\|\bC^{+}\tilde{\bx}\|^{2}\leq \|\bh\|^{2}\;,
\end{equation}
with $\tilde{\bx}$ the versor of $\bx$. Then, it holds
\begin{equation}\label{Eq:UBx}
\|\bx\|^{2}\leq\frac{\|\bh\|^{2}}{\|\bC^{+}\tilde{\bx}\|^{2}} \leq \|\bh\|^{2}\lambda_{max}\;,
\end{equation}
wherein the last inequality becomes an equality by choosing $\tilde{\bx}=\bu_{max}$. Hence, \eqref{Eq:Opt_x} is obtained and  \eqref{Eq:Opt_y} follows since $\|\by\|=\|\bx\|$ is optimal. 
\end{proof}
Denoting by $\bar{\bx}$ and $\bar{\by}$ the optimal solution of \eqref{Prob:EE_SISO_H2}, and defining $\bgamma_{T}=[\bGamma_{T}(1,1),\ldots, \bGamma_{T}(M_T,M_T)]$, $\bgamma_{R}=[\bGamma_{R}(1,1),\ldots, \bGamma_{R}(M_R,M_R)]$, $\tilde{\bX}=\text{diag}(\bar{\bx})$, $\tilde{\bH}=\text{diag}(\bh)$, the corresponding $\bGamma_{T}$ and $\bGamma_{R}$ are computed observing that $\bGamma_{T}\bh=\tilde{\bH}\bgamma_{T}$ and $\bGamma_{R}\bx=\tilde{\bX}\bgamma_{R}$. As a consequence, recalling that $\bx=\bC\bGamma_{T}\bh$ and $\by=\bGamma_{R}\bx$, it finally holds
\begin{align}
\bar{\bGamma}_{T}&=\text{diag}(\bgamma_{T})=\text{diag}(\tilde{\bH}^{-1}\bC^{+}\bar{\bx})\label{Eq:OptGamma_T}\\
\bar{\bGamma}_{R}&=\text{diag}(\bgamma_{R})=\text{diag}(\tilde{\bX}^{-1}\bar{\by})\label{Eq:OptGamma_R}
\end{align}

Finally, we can also confirm that the optimized $\bGamma_{T}$ and $\bGamma_{R}$ do not depend on the value of the transmit power $p$. Indeed, plugging the optimized $\bGamma_{T}$ and $\bGamma_{R}$ into the EE yields 
\begin{equation}\label{Eq:OptimalEE}
\text{EE}=\frac{B\log_{2}\left(1+\frac{p}{\sigma^{2}}\lambda_{max}\|\bh\|^{2}\|\bg\|^{2}\right)}{\mu p +P_{c}}\;,
\end{equation}
whose optimization is dealt with next.

\subsection{Optimzation of $p$}\label{Sec:TxPower_SISO}
Given \eqref{Eq:OptimalEE}, and neglecting the inessential (as far as the optimization of $p$ is concerned) constant $B$, the optimal $p$ is found by solving
\begin{subequations}\label{Prob:EE_SISO_P}
\begin{align}
&\ds\max_{\footnotesize p}\frac{\log_{2}\left(1+ a p\right)}{\mu p+P_{c}}\label{Prob:EE_SISO_Pa}\\
&\;\text{s.t.}\; 0\leq p\leq P_{max}\label{Prob:EE_SISO_Pb}
\end{align}
\end{subequations}
wherein we have defined  $a=|\bg^{H}\bar{\bGamma}_{R}\bC\bar{\bGamma}_{T}\bh|^{2}/\sigma^{2}$.
Problem \eqref{Prob:EE_SISO_P} is a pseudo-concave maximization whose optimal solution is given by the following result. 
\begin{proposition}\label{Prop:OptP}
The global solution of Problem \eqref{Prob:EE_SISO_P} is 
\beq\label{Eq:OptP}
\bar{p}=\min(P_{max},p^{\star})\;,
\eeq
with $p^{\star}$ the unique stationary point of \eqref{Prob:EE_SISO_Pa}. 
\end{proposition}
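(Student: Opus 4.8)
The plan is to exploit the single‑peaked (unimodal) structure of the objective in \eqref{Prob:EE_SISO_Pa}, working in the non‑degenerate case $a>0$ (if $a=0$ the EE is identically zero and the claim is vacuous). Writing $f(p)=\log_{2}(1+ap)/(\mu p+P_{c})=\ln(1+ap)/\big(\ln 2\,(\mu p+P_{c})\big)$, a direct differentiation gives $f'(p)=g(p)/\big(\ln 2\,(\mu p+P_{c})^{2}\big)$ with $g(p)=a(\mu p+P_{c})/(1+ap)-\mu\ln(1+ap)$. Since the prefactor of $g(p)$ is strictly positive for $p\ge 0$, the sign of $f'$ coincides with the sign of $g$, so the whole analysis reduces to the scalar function $g$ on $[0,\infty)$.

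Next I would establish that $g$ has exactly one zero on $(0,\infty)$. Differentiating once more yields $g'(p)=-a^{2}(P_{c}+\mu p)/(1+ap)^{2}<0$ for all $p\ge 0$, so $g$ is strictly decreasing; moreover $g(0)=aP_{c}>0$, while $\lim_{p\to\infty}g(p)=-\infty$ because the rational term of $g$ tends to the finite limit $\mu$ whereas the logarithmic term diverges. By continuity and strict monotonicity, $g$ vanishes at a unique point $p^{\star}\in(0,\infty)$, which is therefore the unique stationary point of \eqref{Prob:EE_SISO_Pa}. It also follows that $f$ is strictly increasing on $[0,p^{\star})$ and strictly decreasing on $(p^{\star},\infty)$, i.e. $f$ is unimodal on $\mathbb{R}_{+}$ (in particular strictly pseudo‑concave, consistent with the ratio of a nonnegative concave function and a positive affine function) with $p^{\star}$ as its unique unconstrained maximizer.

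Finally I would restrict to the feasible set $[0,P_{max}]$. If $p^{\star}\le P_{max}$, then $p^{\star}$ is feasible and, being the unconstrained maximizer, is also the constrained one. If $p^{\star}>P_{max}$, then $[0,P_{max}]\subseteq[0,p^{\star})$, on which $f$ is strictly increasing, so the maximum over $[0,P_{max}]$ is attained at the right endpoint $P_{max}$. Combining the two cases gives $\bar{p}=\min(P_{max},p^{\star})$, as claimed; existence of a maximizer is guaranteed a priori since $f$ is continuous on the compact interval $[0,P_{max}]$. I do not anticipate a genuine obstacle here: the only steps requiring care are verifying the sign of $g'$ and the limit $g(p)\to-\infty$ (both short computations) and checking that $p^{\star}$ is well defined before the statement refers to ``the unique stationary point''; no appeal to general fractional‑programming machinery is needed.
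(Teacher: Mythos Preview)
Your argument is correct and follows the same overall strategy as the paper: establish that the objective in \eqref{Prob:EE_SISO_Pa} is unimodal on $[0,\infty)$ with a unique stationary point $p^{\star}$, and then restrict to $[0,P_{max}]$ to obtain $\bar{p}=\min(P_{max},p^{\star})$.

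The only difference is in how unimodality is obtained. The paper invokes the general fact that the ratio of a strictly concave function and a positive affine function is strictly pseudo-concave, from which the existence and uniqueness of the stationary point (and the increasing/decreasing behaviour around it) follow immediately. You instead carry out the computation by hand: you write $f'(p)=g(p)/\big(\ln 2\,(\mu p+P_{c})^{2}\big)$, verify $g(0)=aP_{c}>0$, $g'(p)=-a^{2}(P_{c}+\mu p)/(1+ap)^{2}<0$, and $g(p)\to-\infty$, which yields the same conclusion without appealing to pseudo-concavity theory. Your route is more self-contained and makes the structure of $p^{\star}$ explicit; the paper's is shorter but relies on a cited result. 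Both are valid and lead to the same endpoint analysis.
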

\begin{IEEEproof}
The objective in \eqref{Prob:EE_SISO_Pa} is the ratio between a strictly concave function and an affine function, and thus it is strictly pseudo-concave. As a result, it admits a unique stationary point $p^{\star}$, which coincides with its unconstrained maximizer. Then, the thesis is obtained enforcing the constraint in \eqref{Prob:EE_SISO_Pb}, and observing that \eqref{Prob:EE_SISO_Pa} is increasing for $p\leq p^{\star}$ and decreasing for $p\geq p^{\star}$. 
\end{IEEEproof}
Thus, the global solution of \eqref{Prob:EE_SISO} can be obtained by allocating $\bGamma_{T}$ according to \eqref{Eq:OptGamma_T}, $\bGamma_{R}$ according to \eqref{Eq:OptGamma_R}, and $p$ according to \eqref{Eq:OptP}. The computational complexity required by this allocation is negligible, since closed-form results for all three radio resources have been obtained and no numerical, iterative algorithms are required. 
\begin{remark}\label{Rem:Cap_SISO}
If capacity maximization is the goal, the optimal $\bGamma_{T}$ and $\bGamma_{R}$ do not change, while the optimal transmit power is given by $p^{\star}=P_{max}$. 
\end{remark}

\section{Multiple-antenna, single-stream scenario}\label{Sec:MIMO_ST}
Let us consider now a MIMO scenario, thus relaxing the assumption $N_{R}=N_{T}=1$. However, we assume that a single data-stream is transmitted by the transmitter. This implies that the covariance matrix $\bQ$ reduces to a beamforming vector of size $N_{T}\times 1$ and squared norm $p$, i.e. $\bQ=\bq\bq^{H}$, with $\|\bq\|^{2}$. Then, the problem can be stated as 
\begin{subequations}\label{Prob:EE_SIMO}
\begin{align}
&\ds\max_{\footnotesize p,\bq,\bGamma_{T},\bGamma_{R}}\frac{B\log_{2}\left(1+\ds\frac{1}{\sigma^{2}}\left\|\bG^{H}\bGamma_{R}\bC\bGamma_{T}\bH\bq\right\|^{2}\right)}{\mu \|\bq\|^{2}+P_{c}}\label{Prob:EE_SIMOa}\\
&\;\text{s.t.}\;\|\bq\|^{2}\leq P_{max}\label{Prob:EE_SIMOb}\\
&\;\quad\;\;\|\bGamma_{T}\bH\bq\|^{2}\leq \|\bH\bq\|^{2}\label{Prob:EE_SIMOc}\\
&\;\quad\;\;\|\bGamma_{R}\bC\bGamma_{T}\bH\bq\|^{2}\leq \|\bC\bGamma_{T}\bH\bq\|^{2}\label{Prob:EE_SIMOd}
\end{align}
\end{subequations}
Proceeding similarly as in the single-antenna scenario, we will consider separately the optimization of the two WsRHSs and that of the beamforming vector. In this case, the optimization of $\bGamma_{R}$ and $\bGamma_{T}$ can be accomplished in closed-form, but the optimal $\bGamma_{R}$ and $\bGamma_{T}$ will depend on the beamforming vector $\bq$, due to the use of WsRHSs with the global reflection capabilities. Thus, it is not possible to decouple WsRHS optimization from beamforming design, as done in the single-antenna scenario, and alternating optimization between $(\bGamma_{R},\bGamma_{T})$ and $\bq$ is necessary. 

\subsection{Optimization of $\bGamma_{T}$ and $\bGamma_{R}$}\label{Sec:RHS_SIMO}
For any given $\bq$, let us redefine the vector $\bh$ as $\bh=\bH\bq$. Then, the problem reduces to
\begin{subequations}\label{Prob:EE_SIMO_H}
\begin{align}
&\ds\max_{\footnotesize \bGamma_{T},\bGamma_{R}}\left\|\bG^{H}\bGamma_{R}\bC\bGamma_{T}\bh \right\|^{2}\label{Prob:EE_SIMO_Ha}\\
&\;\text{s.t.}\;\|\bGamma_{T}\bh\|^{2}\leq \|\bh\|^{2}\label{Prob:EE_SIMO_Hb}\\
&\;\quad\;\;\|\bGamma_{R}\bC\bGamma_{T}\bh\|^{2}\leq \|\bC\bGamma_{T}\bh\|^{2}\label{Prob:EE_SIMO_Hc}
\end{align}
\end{subequations}
Defining $\bx=\bC\bGamma_{T}\bh$, $\by=\bGamma_{R}\bx$, and assuming $\bC$ has a left inverse, i.e. $\bC^{+}\bC=\bI_{M_{T}}$, Problem \eqref{Prob:EE_SIMO_H} can be restated as
\begin{subequations}\label{Prob:EE_SIMO_H2}
\begin{align}
&\ds\max_{\footnotesize \bx,\by}\left\|\bG^{H}\by\right\|^{2}\label{Prob:EE_SIMO_H2a}\\
&\;\text{s.t.}\;\|\bC^{+}\bx\|^{2}\leq \|\bh\|^{2}\label{Prob:EE_SIMO_H2b}\\
&\;\quad\;\;\|\by\|^{2}\leq \|\bx\|^{2}\label{Prob:EE_SIMO_H2c}
\end{align}
\end{subequations}
\begin{proposition}\label{Prop:OptSIMO}
The optimal solution of Problem \eqref{Prob:EE_SIMO_H2} is 
\begin{align}
\bx&=\|\bh\|\sqrt{\lambda_{max}}\bu_{max}\label{Eq:Opt_x}\\
\by&=\|\bh\|\sqrt{\lambda_{max}}\bu_{G,max}\;\label{Eq:Opt_y},
\end{align}
wherein $\lambda_{max}$ is the maximum eigenvalue of $\bC\bC^{H}$, $\bu_{max}$ the corresponding unit-norm eigenvector, and $\bu_{G,max}$ the unit-norm eigenvector corresponding to the maximum eigenvalue of $\bG\bG^{H}$.
\end{proposition}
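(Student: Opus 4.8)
The plan is to follow the same line of reasoning used in the proof of Proposition~\ref{Prop:OptSISO}, the only new ingredient being that the scalar term $|\bg^{H}\by|^{2}$ is now replaced by the vector term $\|\bG^{H}\by\|^{2}$, so that the Cauchy--Schwarz alignment step becomes a Rayleigh-quotient maximization. First I would observe that the objective $\|\bG^{H}\by\|^{2}=\by^{H}\bG\bG^{H}\by$ depends on $\by$ only, and that for any fixed value of $\|\by\|$ it equals $\|\by\|^{2}$ times the Rayleigh quotient of the Hermitian positive semi-definite matrix $\bG\bG^{H}$; hence it is maximized by aligning $\by$ with $\bu_{G,max}$, the unit eigenvector associated with the largest eigenvalue $\lambda_{G,max}$ of $\bG\bG^{H}$, which yields $\|\bG^{H}\by\|^{2}=\lambda_{G,max}\|\by\|^{2}$.

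Since this value is increasing in $\|\by\|$, constraint \eqref{Prob:EE_SIMO_H2c} is necessarily active at the optimum, i.e. $\|\by\|=\|\bx\|$, and the problem reduces to maximizing $\|\bx\|$ subject to \eqref{Prob:EE_SIMO_H2b}. At this point I would reuse verbatim the bounding argument of Proposition~\ref{Prop:OptSISO}: writing $\bx=\|\bx\|\,\tilde{\bx}$ with $\tilde{\bx}$ a unit vector — necessarily lying in the column space of $\bC$, since $\bx=\bC\bGamma_{T}\bh$ — constraint \eqref{Prob:EE_SIMO_H2b} reads $\|\bx\|^{2}\,\|\bC^{+}\tilde{\bx}\|^{2}\leq\|\bh\|^{2}$, and since $\|\bC^{+}\tilde{\bx}\|^{2}\geq 1/\lambda_{max}$ over all such unit vectors, with $\lambda_{max}$ the largest eigenvalue of $\bC\bC^{H}$ and equality attained at $\tilde{\bx}=\bu_{max}$, one obtains $\|\bx\|^{2}\leq\|\bh\|^{2}\lambda_{max}$, hence $\bx=\|\bh\|\sqrt{\lambda_{max}}\,\bu_{max}$, which is \eqref{Eq:Opt_x}. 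Substituting $\|\by\|=\|\bx\|=\|\bh\|\sqrt{\lambda_{max}}$ into the aligned $\by=\|\by\|\,\bu_{G,max}$ then gives \eqref{Eq:Opt_y}.

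To close the argument I would verify that the optimal $(\bx,\by)$ is actually realizable by diagonal reflection matrices and that the relaxation from \eqref{Prob:EE_SIMO_H} to \eqref{Prob:EE_SIMO_H2} is tight: since $\bu_{max}$ is an eigenvector of $\bC\bC^{H}$ with positive eigenvalue it lies in $\mathrm{col}(\bC)$, so $\bGamma_{T}$ is recovered from $\bGamma_{T}\bh=\bC^{+}\bx$ as in \eqref{Eq:OptGamma_T}, and $\bGamma_{R}$ from $\bGamma_{R}\bx=\by$ as in \eqref{Eq:OptGamma_R}, assuming, as generically holds, that $\bh$ and $\bx$ have no vanishing entries.

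The step I expect to require the most care is the decoupling of the \emph{direction} of $\by$, which is dictated by $\bG$, from the \emph{magnitude} of $\by$, which is dictated through $\bx$ by $\bC$: one must argue that the simultaneous choices $\by\propto\bu_{G,max}$ and $\bx\propto\bu_{max}$ are jointly feasible. This is indeed the case precisely because constraint \eqref{Prob:EE_SIMO_H2c} couples $\bx$ and $\by$ only through their norms and not their directions, so the two alignments never conflict; everything else is an essentially mechanical adaptation of the single-antenna proof.
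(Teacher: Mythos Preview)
Your proposal is correct and follows essentially the same approach as the paper: rewrite the objective as $\by^{H}\bG\bG^{H}\by$, align $\by$ with $\bu_{G,max}$ via the Rayleigh quotient in place of Cauchy--Schwarz, and then reuse the norm-maximization argument of Proposition~\ref{Prop:OptSISO} verbatim. Your additional remarks on realizability and on the decoupling of the direction of $\by$ from the magnitude inherited through $\bx$ are more explicit than what the paper provides, but they do not change the line of argument.
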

\begin{IEEEproof}
The objective in \eqref{Prob:EE_SISO_H2a} can be written as $\by^{H}\bG\bG^{H}\by$, which implies that it is maximized when $\by=\|\by\|\bu_{G,max}$, 
with $\bu_{G,max}$ the unit-norm eigenvector of $\bG\bG^{H}$ corresponding to the maximum eigenvalue and the norm  $\|\by\|$ to be determined based on the problem constraints. Thus, proceeding similarly as in the proof of Proposition \ref{Prop:OptSISO}, the thesis follows. 
\end{IEEEproof}
Finally, from \eqref{Eq:Opt_x} and \eqref{Eq:Opt_y}, the optimal $\bGamma_{T}$ and $\bGamma_{R}$ can be obtained as in \eqref{Eq:OptGamma_T} and \eqref{Eq:OptGamma_R}.
\begin{remark}
As anticipated, in this case the optimal $\bGamma_{T}$ and $\bGamma_{R}$ depend on the beamforming vector $\bq$, since $\bq$ appears in the definition of the equivalent channel vector $\bh$. This is a direct consequence of the fact that WsRHSs with global reflection capabilities are employed, which leads to the power constraints in \eqref{Prob:EE_SIMOc} and \eqref{Prob:EE_SIMOd}, which are coupled in $\bq$ and $(\bGamma_{T},\bGamma_{R})$. As a result, the optimization of $(\bGamma_{T},\bGamma_{R})$ can not be decoupled from that of $\bq$, and alternating maximization between $(\bGamma_{T},\bGamma_{R})$ and $\bq$ will be required. 
\end{remark}

\subsection{Optimization of $\bq$}\label{Sec:TxPower_SIMO}
After optimizing $\bGamma_{T}$ and $\bGamma_{R}$, the problem becomes
\begin{subequations}\label{Prob:EE_SIMO_Beam}
\begin{align}
&\ds\max_{\footnotesize\bq}\frac{B\log_{2}\left(1+\left\|\bM\bq\right\|^{2}\right)}{\mu \|\bq\|^{2}+P_{c}}\label{Prob:EE_SIMO_Beam_a}\\
&\;\text{s.t.}\;\|\bq\|^{2}\leq P_{max}\label{Prob:EE_SIMO_Beam_b}\\
&\;\quad\;\;\|\bGamma_{T}\bH\bq\|^{2}-\|\bH\bq\|^{2}\leq 0\label{Prob:EE_SIMO_Beam_c}\\
&\;\quad\;\;\|\bGamma_{R}\bC\bGamma_{T}\bH\bq\|^{2}- \|\bC\bGamma_{T}\bH\bq\|^{2}\leq 0\label{Prob:EE_SIMO_Beam_d}\;,
\end{align}
\end{subequations}
with $\bM=\frac{1}{\sigma^{2}}\bG^{H}\bGamma_{R}\bC\bGamma_{T}\bH$. The difficulty posed by Problem \eqref{Prob:EE_SIMO_Beam} lies in the fact that the objective is not concave or pseudo-concave, and that the constraints are not convex. In order to tackle \eqref{Prob:EE_SIMO_Beam}, in the following we resort to the sequential fractional programming (SFP) framework \cite{ZapNow15}. Exploiting the fact that the norm function is convex, and so it is lower-bounded by its first-order Taylor expansion, we have that, for any vector $\bq_{0}$, it holds 
\begin{align}
\|\bM\bq\|^{2}&=\bq^{H}\bM^{H}\bM\bq\geq2\Re\{\bq_{0}^{H}\bM^{H}\bM\bq\}\notag\\
&-\bq_{0}^{H}\bM^{H}\bM\bq_{0}=2\Re\{\bbm_{0}^{H}\bq\}-m_{0}\\
\|\bH\bq\|^{2}&=\bq^{H}\bH^{H}\bH\bq\geq 2\Re\{\bq_{0}^{H}\bH^{H}\bH\bq\}\notag\\
&-\bq_{0}^{H}\bH^{H}\bH\bq_{0}=2\Re\{\bh_{0}^{H}\bq\}-h_{0}\\
\|\bC\bGamma_{T}\bH\bq\|^{2}&=\bq^{H}\bH^{H}\bGamma_{T}^{H}\bC^{H}\bC\bGamma_{T}\bH\bq\notag\\
&\hspace{-2cm}\geq 2\Re\{\bq_{0}^{H}\bH^{H}\bGamma_{T}^{H}\bC^{H}\bC\bGamma_{T}\bH\bq\}-\bq_{0}^{H}\bH^{H}\bGamma_{T}^{H}\bC^{H}\bC\bGamma_{T}\bH\bq_{0}\notag\\
&=2\Re\{\bc_{0}^{H}\bq\}-c_{0}\;,
\end{align}
wherein we have defined
\begin{align}
\bbm_{0}&\!=\!\bM^{H}\bM\bq_{0}\;,m_{0}\!=\!\bq_{0}^{H}\bM^{H}\bM\bq_{0}\\
\bh_{0}&\!=\!\bH^{H}\bH\bq_{0}\;,h_{0}\!=\!\bq_{0}^{H}\bH^{H}\bH\bq_{0}\\
\bc_{0}&\!=\!\bH^{H}\bGamma_{T}^{H}\bC^{H}\bC\bGamma_{T}\bH\bq_{0}\;,c_{0}\!=\!\bq_{0}^{H}\bH^{H}\bGamma_{T}^{H}\bC^{H}\bC\bGamma_{T}\bH\bq_{0}
\end{align}
Then, a surrogate problem for \eqref{Prob:EE_SIMO_Beam} can be stated as 
\begin{subequations}\label{Prob:EE_SIMO_Beam_LB}
\begin{align}
&\ds\max_{\footnotesize\bq}\frac{B\log_{2}\left(1+2\Re\{\bbm_{0}^{H}\bq\}-m_{0}\right)}{\mu \|\bq\|^{2}+P_{c}}\label{Prob:EE_SIMO_Beam_LBa}\\
&\;\text{s.t.}\;\|\bq\|^{2}\leq P_{max}\label{Prob:EE_SIMO_Beam_LBb}\\
&\;\quad\;\;\|\bGamma_{T}\bH\bq\|^{2}-2\Re\{\bh_{0}^{H}\bq\}+h_{0}\leq 0\label{Prob:EE_SIMO_Beam_LBc}\\
&\;\quad\;\;\|\bGamma_{R}\bC\bGamma_{T}\bH\bq\|^{2}-2\Re\{\bc_{0}^{H}\bq\}+c_{0} \leq 0\label{Prob:EE_SIMO_Beam_LBd}\;,
\end{align}
\end{subequations}
wherein the left-hand-sides of \eqref{Prob:EE_SIMO_Beam_LBc} and \eqref{Prob:EE_SIMO_Beam_LBd} are upper-bounds of the left-hand-sides of \eqref{Prob:EE_SIMO_Beam_c} and \eqref{Prob:EE_SIMO_Beam_d}, respectively. Problem \eqref{Prob:EE_SIMO_Beam_LB} is a pseudo-concave maximization subject to convex constraints, which can be solved by fractional programming \cite{ZapNow15}. Then, the SFP algorithm to tackle Problem \eqref{Prob:EE_SIMO_Beam} is stated as in Algorithm \ref{Alg:SFP_SIMO}, wherein \texttt{Obj}$(\bq)$ denotes the objective of \eqref{Prob:EE_SIMO_Beam} evaluated at $\bq$. Being an instance of the SFP method, Algorithm \eqref{Alg:SFP_SIMO} is guaranteed to monotonically improve \eqref{Prob:EE_SIMO_Beam_a} and converge to a first-order optimal point of \eqref{Prob:EE_SIMO_Beam} \cite{ZapNow15}. 
\begin{algorithm}\caption{SFP for Problem \eqref{Prob:EE_SIMO_Beam}}
\label{Alg:SFP_SIMO}
\begin{algorithmic}
\STATE $\varepsilon>0$; 
\STATE \texttt{Select any feasible} $\bq_{0}$;
\REPEAT
\STATE \texttt{Let} $\bq$ \texttt{ be the solution of Problem \eqref{Prob:EE_SIMO_Beam_LB}};
\STATE $T=\|\texttt{Obj}(\bq)-\texttt{Obj}(\bq_{0})\|$; $\bq_{0}=\bq$;
\UNTIL{$T\leq\varepsilon$}
\end{algorithmic}
\end{algorithm}

\subsection{Overall algorithm, convergence, and complexity}
The alternating maximization algorithm to tackle Problem \eqref{Prob:EE_SIMO} is formulated as in Algorithm \ref{Alg:AO_SIMO}.
\begin{algorithm}\caption{Alternating maximization for Problem \eqref{Prob:EE_SIMO}}
\label{Alg:AO_SIMO}
\begin{algorithmic}
\STATE $\varepsilon>0$; 
\STATE \texttt{Select any feasible} $\bq_{0}$, $\bGamma_{T,0}$, $\bGamma_{R,0}$;
\REPEAT
\STATE \texttt{Let} $(\bGamma_{T},\bGamma_{R})$ \texttt{ be the solution of Problem \eqref{Prob:EE_SIMO_H}, given $\bq$};
\STATE \texttt{Let} $\bq$ \texttt{be the solution of Problem \eqref{Prob:EE_SIMO_Beam}, given $(\bGamma_{T},\bGamma_{R})$}; 
\STATE $T=|\text{EE}((\bGamma_{T},\bGamma_{R}),\bq)-\text{EE}((\bGamma_{T,0},\bGamma_{R,0}),\bq_{0})|$; 
\STATE $\bGamma_{T}=\bGamma_{T,0}$, $\bGamma_{R}=\bGamma_{R,0}$, $\bq=\bq_{0}$;
\UNTIL{$T\leq\varepsilon$}
\end{algorithmic}
\end{algorithm}
\subsubsection{Convergence} Algorithm \ref{Alg:AO_SIMO} is provably convergent, as shown in the next proposition.  
\begin{proposition}\label{Prop:Convergence}
Algorithm \ref{Alg:AO_SIMO} monotonically improves the value of \eqref{Prob:EE_SIMOa} and converges in the value of the objective.
\end{proposition}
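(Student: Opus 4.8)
The plan is to establish convergence through the standard monotonicity argument for alternating maximization, combined with boundedness of the objective. First I would observe that Algorithm \ref{Alg:AO_SIMO} performs block-coordinate updates on the two blocks $(\bGamma_T,\bGamma_R)$ and $\bq$, and the key is to show that each of the two inner steps does not decrease the EE. For the $(\bGamma_T,\bGamma_R)$-step: given the current $\bq$, the algorithm solves Problem \eqref{Prob:EE_SIMO_H} \emph{exactly} via the closed-form solution of Proposition \ref{Prop:OptSIMO}. Since the current iterate $(\bGamma_{T,0},\bGamma_{R,0})$ is feasible for Problem \eqref{Prob:EE_SIMO_H} (its constraints \eqref{Prob:EE_SIMO_Hb}--\eqref{Prob:EE_SIMO_Hc} coincide with \eqref{Prob:EE_SIMOc}--\eqref{Prob:EE_SIMOd} evaluated at that point), and since maximizing $\|\bG^H\bGamma_R\bC\bGamma_T\bh\|^2$ with $\bh=\bH\bq$ fixed is equivalent to maximizing the numerator of \eqref{Prob:EE_SIMOa} while the denominator $\mu\|\bq\|^2+P_c$ stays constant, the new $(\bGamma_T,\bGamma_R)$ yields an EE value at least as large as before. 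I would also note that the new pair remains feasible for the full problem, since constraint \eqref{Prob:EE_SIMOb} does not involve the metasurfaces and \eqref{Prob:EE_SIMOc}--\eqref{Prob:EE_SIMOd} are precisely the constraints imposed in \eqref{Prob:EE_SIMO_H}.

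Next I would handle the $\bq$-step. Here the algorithm does not solve Problem \eqref{Prob:EE_SIMO_Beam} exactly, but runs Algorithm \ref{Alg:SFP_SIMO}, which is an instance of the SFP method. The crucial point, already established in the text, is that Algorithm \ref{Alg:SFP_SIMO} monotonically improves \eqref{Prob:EE_SIMO_Beam_a} and converges to a first-order optimal point of \eqref{Prob:EE_SIMO_Beam}; in particular, starting the SFP inner loop from the current $\bq_0$, the returned $\bq$ satisfies $\texttt{Obj}(\bq)\geq\texttt{Obj}(\bq_0)$. Since, with $(\bGamma_T,\bGamma_R)$ held fixed, $\texttt{Obj}$ is exactly the EE as a function of $\bq$, this step too does not decrease the EE; and because the SFP surrogate \eqref{Prob:EE_SIMO_Beam_LB} has constraints that are tightened (upper-bounded) versions of \eqref{Prob:EE_SIMO_Beam_c}--\eqref{Prob:EE_SIMO_Beam_d}, the returned $\bq$ is feasible for \eqref{Prob:EE_SIMO_Beam} and hence for the original Problem \eqref{Prob:EE_SIMO}. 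Chaining the two inequalities gives that the sequence of EE values produced by Algorithm \ref{Alg:AO_SIMO} is nondecreasing.

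Finally I would argue convergence of this monotone sequence. The EE is bounded above: the numerator $B\log_2(1+\|\bM\bq\|^2/\sigma^2)$ is finite for any feasible $\bq$ (which satisfies $\|\bq\|^2\le P_{max}$, so the equivalent channel gain is bounded in terms of the fixed channel matrices and the reflection constraints), while the denominator is bounded below by $P_c>0$. A monotone nondecreasing sequence that is bounded above converges, which gives convergence in the objective value as claimed. I would close by remarking that this establishes convergence of the EE value, not necessarily of the iterates themselves, consistent with the statement of the proposition.

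The main obstacle I anticipate is being careful about the $\bq$-step: one must be sure that Algorithm \ref{Alg:SFP_SIMO}, when initialized at the incumbent $\bq_0$, actually produces a point with EE no smaller than at $\bq_0$ (rather than merely converging to \emph{some} stationary point that could a priori be worse). This follows from the monotone improvement property of SFP quoted from \cite{ZapNow15}, but it hinges on two things worth checking explicitly: that $\bq_0$ is feasible for the surrogate \eqref{Prob:EE_SIMO_Beam_LB} built around $\bq_0$ itself (the Taylor bounds are tight at $\bq_0$, so the surrogate constraints reduce to the true ones there), and that the surrogate objective evaluated at $\bq_0$ equals the true objective at $\bq_0$. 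Both hold because first-order Taylor expansions of convex functions are tangent at the expansion point, so no rate or feasibility is lost at initialization.
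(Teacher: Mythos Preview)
Your proposal is correct and follows essentially the same approach as the paper: monotone improvement of the EE at each alternating step (global optimality from Proposition~\ref{Prop:OptSIMO} for the $(\bGamma_T,\bGamma_R)$-update, SFP monotonicity for the $\bq$-update), followed by an upper-boundedness argument to conclude convergence of the objective values. The only minor difference is that you bound the EE via the feasibility constraint $\|\bq\|^{2}\leq P_{max}$ and $P_c>0$, whereas the paper argues that the EE vanishes as $\|\bq\|\to\infty$; both arguments are valid, and your added feasibility checks (in particular the observation that the SFP surrogate is tight at $\bq_0$) make the monotonicity step more explicit than in the paper.
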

\begin{IEEEproof}
Algorithm \ref{Alg:AO_SIMO} alternatively optimizes $\bGamma_{T}$ and $\bGamma_{R}$ for fixed $\bq$, according to Proposition \ref{Prop:OptSIMO}, and $\bq$, for fixed $\bGamma_{T}$ and $\bGamma_{R}$, by running Algorithm \ref{Alg:SFP_SIMO}. Each of these two optimizations does not decrease the value of \eqref{Prob:EE_SIMOa}, because Proposition \ref{Prop:OptSIMO} provides the globally optimal $\bGamma_{T}$ and $\bGamma_{R}$ for any fixed $\bq$, while Algorithm \ref{Alg:SFP_SIMO} is an instance of the SFP framework, and thus monotonically improves the value of \eqref{Prob:EE_SIMO_Beam_a} until convergence to a point fulfilling the first-order optimality conditions of \eqref{Prob:EE_SIMO_Beam} \cite{ZapNow15}. Thus, Algorithm \ref{Alg:AO_SIMO} monotonically improves the value of \eqref{Prob:EE_SIMOa}. Moreover, the objective \eqref{Prob:EE_SIMOa} can be seen to be upper-bounded with respect to $\bGamma_{T}$, $\bGamma_{R}$, and $\bq$, because $\bGamma_{T}$ and $\bGamma_{R}$ are bounded by norm constraints, while $\|\bq\|$ appears linearly in the denominator of \eqref{Prob:EE_SIMOa} and logarithmically in the numerator of \eqref{Prob:EE_SIMOa}, thus implying that \eqref{Prob:EE_SIMOa} tends to zero for $\|\bq\|\to\infty$. Thus, the objective \eqref{Prob:EE_SIMOa} can not increase indefinitely, and Algorithm \ref{Alg:AO_SIMO} converges in the value of the objective. 
\end{IEEEproof}
\begin{remark}\label{Rem:RateMaxSS}
Algorithm \ref{Alg:AO_SIMO} can be readily specialized to perform rate maximization instead of EE maximization, by simply setting $\mu=0$. Indeed, this reduces the denominator of \eqref{Prob:EE_SIMOa} to a constant. 
\end{remark}
\subsubsection{Computational complexity}The computational complexity of Algorithm~\ref{Alg:AO_SIMO} is mostly related to Problem \eqref{Prob:EE_SIMO_Beam}, since the optimal WsRHS matrices are available in closed form for fixed $\bq$. Problem \eqref{Prob:EE_SIMO_Beam} is solved by Algorithm \ref{Alg:SFP_SIMO}, which involves the solution of a pseudo-concave problem with $N_{T}$ variables in each iteration. Next, we recall that a fractional function of $N_{T}$ variables, with concave numerator and convex denominator, can be maximized subject to convex constraints with a complexity equivalent to that of a convex problem with $N_{T}+1$ variables \cite{ZapNow15}, and that the complexity of a convex problem with $n$ variables is upper-bounded, with respect to $n$, by $n^{4}$ \cite{BenTal2001ConvexOptimization}. Thus, the complexity of Algorithm~\ref{Alg:AO_SIMO} can be upper-bounded by 
\beq
\mathcal{C}_{2}=\mathcal{O}\left(I_{1}I_{2}(N_{T}+1)^{4}\right)\;,
\eeq
with $I_{1}$ and $I_{2}$ the number of iterations for Algorithms \ref{Alg:SFP_SIMO} and \ref{Alg:AO_SIMO} to converge, respectively.  

\section{Multi-antenna, multi-stream transmission}\label{Sec:MIMO_MT}
In this section, we address the general MIMO case stated in Problem \eqref{Prob:EE_MIMO}, in which $N_{T}>1$, $N_{R}>1$, and $\text{rank}(\bQ)>1$, i.e. multiple data streams are transmitted. In this case, no closed-form solution is available and it becomes difficult to even optimize the WsRHSs $\bGamma_{R}$ and $\bGamma_{T}$ jointly. Thus, alternating maximization will be employed, iterating among $\bGamma_{R}$, $\bGamma_{T}$, $\bQ$. 

\subsection{Optimization of $\bGamma_{R}$}
For fixed $\bGamma_{T}$ and $\bQ$, define $\bB=\bC\bGamma_{T}\bH\bQ\bH^{H}\bGamma_{T}^{H}\bC^{H}$. Then, the optimization with respect to $\bGamma_{R}$ can be stated as
\begin{subequations}\label{Prob:Gamma_R_MIMO}
\begin{align}
&\ds\max_{\footnotesize \bGamma_{R}}B\log_{2}\left|\bI_{N_{R}}+\frac{1}{\sigma^{2}}\bG\bGamma_{R}\bB\bGamma_{R}^{H}\bG^{H}\right|\label{Prob:Gamma_R_MIMOa}\\
&\;\text{s.t.}\;\tr(\bGamma_{R}\bB\bGamma_{R}^{H})\leq \tr(\bB) \label{Prob:Gamma_R_MIMOb}
\end{align}
\end{subequations}
To begin with, let us focus on \eqref{Prob:Gamma_R_MIMOa}. Performing the eigenvalue decomposition $\bB=\sum_{m=1}^{M_{R}}\lambda_{m}\bu_{m}\bu_{m}^{H}$, \eqref{Prob:Gamma_R_MIMOa} becomes
\begin{align}
C&=B\log_{2}\left|\bI_{N_{R}}+\bG\bGamma_{R}\left(\sum_{m=1}^{M_{R}}\frac{\lambda_{m}}{\sigma^{2}}\bu_{m}\bu_{m}^{H}\right)\bGamma_{R}^{H}\bG^{H}\right|\notag\\
&=B\log_{2}\left|\bI_{N_{R}}+\sum_{m=1}^{M_{R}}\frac{\lambda_{m}}{\sigma^{2}}\bG\bU_{m}\bgamma_{R}\bgamma_{R}^{H}\bU_{m}^{H}\bG^{H}\right|\notag\\
&=B\log_{2}\left|\bI_{N_{R}}+\sum_{m=1}^{M_{R}}\frac{\lambda_{m}}{\sigma^{2}}\bR_{m}\bgamma_{R}\bgamma_{R}^{H}\bR_{m}^{H}\right|\;,
\end{align}
with $\bU_{m}=\text{diag}(\bu_{m})$ and $\bR_{m}=\bG\bU_{m}$, for $m=1,\ldots,M_{R}$. Next, let us rewrite the left-hand-side of \eqref{Prob:Gamma_R_MIMOb} as 
\begin{align}
&\tr(\bGamma_{R}\bB\bGamma_{R}^{H})=\tr\left(\bGamma_{R}\left(\sum_{m=1}^{M_{R}}\lambda_{m}\bu_{m}\bu_{m}^{H}\right)\bGamma_{R}^{H}\right)\\
&=\sum_{m=1}^{M_{R}}\lambda_{m}\tr\left(\bGamma_{R}\bu_{m}\bu_{m}^{H}\bGamma_{R}^{H}\right)=\sum_{m=1}^{M_{R}}\lambda_{m}\tr\left(\bU_{m}\bgamma_{R}\bgamma_{R}^{H}\bU_{m}^{H}\right)\notag\\
&=\bgamma_{R}^{H}\left(\sum_{m=1}^{M_{R}}\lambda_{m}\bU_{m}\bU_{m}^{H}\right)\bgamma_{R}^{H}=\tr\left(\bD_{R}\bgamma_{R}\bgamma_{R}^{H}\right)\;,\notag
\end{align}
with $\bD_{R}=\sum_{m=1}^{M_{R}}\lambda_{m}\bU_{m}^{H}\bU_{m}$. Then, Problem \eqref{Prob:Gamma_R_MIMO} becomes 
\begin{subequations}\label{Prob:Gamma_R_MIMO_2}
\begin{align}
&\ds\max_{\footnotesize \bGamma_{R}}\log_{2}\left|\bI_{N_{R}}+\sum_{m=1}^{M_{R}}\frac{\lambda_{m}}{\sigma^{2}}\bR_{m}\bgamma_{R}\bgamma_{R}^{H}\bR_{m}^{H}\right|\label{Prob:Gamma_R_MIMO_2a}\\
&\;\text{s.t.}\;\tr\left(\bD_{R}\bgamma_{R}\bgamma_{R}^{H}\right)\leq \tr(\bB) \label{Prob:Gamma_R_MIMO_2b}
\end{align}
\end{subequations}
Problem \eqref{Prob:Gamma_R_MIMO_2} is still challenging, due to the quadratic term $\bgamma_{R}\bgamma_{R}^{H}$. A popular approach to deal with this issue is the semi-definite relaxation technique, which defines $\widetilde{\bGamma}_{R}=\bgamma_{R}\bgamma_{R}^{H}$, and then relaxes the unit-rank constraints on $\widetilde{\bGamma}_{R}$. However, this requires a rank-reduction technique if the solution that is obtained is not unit-rank, which might cause a significant performance degradation. Another possible approach would be to reformulate the constraint\footnote{Note that $\tr(\widetilde{\bGamma}_{R})=\lambda_{max}(\widetilde{\bGamma}_{R})$ implies that $\widetilde{\bGamma}_{R}$ has only one non-zero eigenvalue since $\widetilde{\bGamma}_{R}\succeq \bzero$.} $\widetilde{\bGamma}_{R}=\bgamma_{R}\bgamma_{R}^{H}$ into $\tr(\widetilde{\bGamma}_{R})-\lambda_{max}(\widetilde{\bGamma}_{R})=0$ and $\widetilde{\bGamma}_{R}\succeq \bzero$. However, this leaves us with an equality constraint that is still non-convex, and is typically approximated by $\tr(\widetilde{\bGamma}_{R})-\lambda_{max}(\widetilde{\bGamma}_{R})\geq \epsilon$, with $\epsilon$ a small constant, and then tackled by the sequential approximation method \cite{Pan2022}. However, both methods above involve an approximation that leads to an algorithm without strong optimality claims. Instead, here we propose a different approach. We still define the matrix $\widetilde{\bGamma}_{R}=\bgamma_{R}\bgamma_{R}^{H}$, but we do not relax the unit-rank constraint and reformulate \eqref{Prob:Gamma_R_MIMO_2} as
\begin{subequations}\label{Prob:Gamma_R_MIMO_3}
\begin{align}
&\ds\max_{\footnotesize \widetilde{\bGamma}_{R}\succeq \bzero, \bgamma_{R}}\log_{2}\left|\bI_{N_{R}}+\sum_{m=1}^{M_{R}}\frac{\lambda_{m}}{\sigma^{2}}\bR_{m}\widetilde{\bGamma}_{R}\bR_{m}^{H}\right|\label{Prob:Gamma_R_MIMO_3a}\\
&\;\text{s.t.}\;\tr\left(\bD_{R}\widetilde{\bGamma}_{R}\right)\leq \tr(\bB)\label{Prob:Gamma_R_MIMO_3b} \\
&\;\quad\;\;\left[\begin{array}{ll}
\widetilde{\bGamma}_{R} & \bgamma_{R}\\
\bgamma_{R}^{H} & 1
\end{array}\right]\succeq \bzero\label{Prob:Gamma_R_MIMO_3c}\\
&\;\quad\;\;\tr(\widetilde{\bGamma}_{R})- \|\bgamma_{R}\|^{2}\leq 0\label{Prob:Gamma_R_MIMO_3d}
\end{align}
\end{subequations}
\begin{proposition}\label{Prop:Rank1_R}
Let $(\widetilde{\bGamma}_{R}^{\star},\bgamma_{R}^{\star})$ be a solution of Problem \eqref{Prob:Gamma_R_MIMO_3}. Then, $\text{rank}(\widetilde{\bGamma}_{R}^{\star})=1$.
\end{proposition}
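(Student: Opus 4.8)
The plan is to argue that constraints \eqref{Prob:Gamma_R_MIMO_3c} and \eqref{Prob:Gamma_R_MIMO_3d}, acting jointly, force the identity $\widetilde{\bGamma}_{R}=\bgamma_{R}\bgamma_{R}^{H}$ at \emph{every} feasible point, so that the rank constraint is effectively encoded in the feasible set itself; the optimality of $(\widetilde{\bGamma}_{R}^{\star},\bgamma_{R}^{\star})$ is then needed only to exclude the degenerate case $\bgamma_{R}^{\star}=\bzero$.

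First I would apply the Schur complement lemma to the block matrix in \eqref{Prob:Gamma_R_MIMO_3c}. Since its bottom-right block equals the scalar $1>0$, the linear matrix inequality \eqref{Prob:Gamma_R_MIMO_3c} is equivalent to $\widetilde{\bGamma}_{R}-\bgamma_{R}\bgamma_{R}^{H}\succeq\bzero$. Next I would rewrite \eqref{Prob:Gamma_R_MIMO_3d} as $\tr(\widetilde{\bGamma}_{R}-\bgamma_{R}\bgamma_{R}^{H})\leq 0$, using $\|\bgamma_{R}\|^{2}=\tr(\bgamma_{R}\bgamma_{R}^{H})$. The Hermitian matrix $\widetilde{\bGamma}_{R}-\bgamma_{R}\bgamma_{R}^{H}$ is then positive semidefinite \emph{and} has nonpositive trace; since the trace of a positive semidefinite matrix is the sum of its nonnegative eigenvalues, that trace must vanish, hence all its eigenvalues are zero, and therefore $\widetilde{\bGamma}_{R}-\bgamma_{R}\bgamma_{R}^{H}=\bzero$. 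In particular $\widetilde{\bGamma}_{R}^{\star}=\bgamma_{R}^{\star}(\bgamma_{R}^{\star})^{H}$, which has rank at most one.

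What remains --- and this is the only genuinely delicate point, since the claim is $\text{rank}=1$ rather than $\leq 1$ --- is to rule out $\bgamma_{R}^{\star}=\bzero$. If this held, then $\widetilde{\bGamma}_{R}^{\star}=\bzero$ and the objective \eqref{Prob:Gamma_R_MIMO_3a} would equal $\log_{2}|\bI_{N_{R}}|=0$; however, whenever $\tr(\bB)>0$ the constraint \eqref{Prob:Gamma_R_MIMO_3b} defines an ellipsoid containing the origin in its interior, so picking a small $\bgamma_{R}\neq\bzero$ and setting $\widetilde{\bGamma}_{R}=\bgamma_{R}\bgamma_{R}^{H}$ yields a feasible point with strictly positive objective, unless $\bR_{m}\bgamma_{R}=\bzero$ for every $m$ with $\lambda_{m}>0$. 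Excluding this non-generic pathology, i.e.\ assuming the optimal value of \eqref{Prob:Gamma_R_MIMO_3} is positive, contradicts optimality of $\bgamma_{R}^{\star}=\bzero$, so $\bgamma_{R}^{\star}\neq\bzero$ and $\text{rank}(\widetilde{\bGamma}_{R}^{\star})=1$. Everything upstream of this is just the routine pairing of the Schur complement with the fact that a positive semidefinite matrix of zero trace is the zero matrix.
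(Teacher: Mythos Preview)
Your proof is correct and follows the same overall strategy as the paper --- combine the Schur complement reading of \eqref{Prob:Gamma_R_MIMO_3c} with the trace constraint \eqref{Prob:Gamma_R_MIMO_3d} --- but your execution is tighter in two places. First, you take the Schur complement with respect to the $(2,2)$ block, the strictly positive scalar $1$, so the equivalence with $\widetilde{\bGamma}_{R}-\bgamma_{R}\bgamma_{R}^{H}\succeq\bzero$ is immediate; the paper instead conditions on $\widetilde{\bGamma}_{R}\succ\bzero$ and then runs a separate continuity argument to cover the singular case, which your choice of block makes unnecessary. Second, you conclude directly that a positive semidefinite matrix with nonpositive trace is the zero matrix, hence $\widetilde{\bGamma}_{R}=\bgamma_{R}\bgamma_{R}^{H}$; the paper reaches the same endpoint by bounding $\lambda_{1}(\widetilde{\bGamma}_{R})\geq\|\bgamma_{R}\|^{2}$ from \eqref{Prob:Gamma_R_MIMO_3c} and $\sum_{i}\lambda_{i}\leq\|\bgamma_{R}\|^{2}$ from \eqref{Prob:Gamma_R_MIMO_3d}, forcing all but one eigenvalue of $\widetilde{\bGamma}_{R}$ to vanish. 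You also address the degeneracy $\bgamma_{R}^{\star}=\bzero$ (rank $=1$ versus rank $\leq 1$), which the paper leaves implicit.
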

\begin{IEEEproof}
To begin with, we prove that any feasible point of \eqref{Prob:Gamma_R_MIMO_3} is such that $\widetilde{\bGamma}_{R}-\bgamma\bgamma^{H}\succeq \bzero$. To this end, we first leverage the fact that, if $\widetilde{\bGamma}_{R}\succ \bzero$, then the matrix in \eqref{Prob:Gamma_R_MIMO_3c} is positive semi-definite if and only if its Schur's complement, i.e. $\widetilde{\bGamma}_{R}-\bgamma\bgamma^{H}$ is positive semi-definite. At this point, we proceed by contradiction to show that the result also holds when $\widetilde{\bGamma}_{R}\succeq \bzero$. Indeed, if $\lambda_{min}(\widetilde{\bGamma}_{R}-\bgamma\bgamma^{H})<0$, then, since the minimum eigenvalue is a continuous function of the elements of a matrix, it must exist $\epsilon>0$ such that 
\begin{equation}
\lambda_{min}(\epsilon\bI_{N_{R}}+\widetilde{\bGamma}_{R}-\bgamma\bgamma^{H})<0\;.
\end{equation}
This is a contradiction, since $\epsilon\bI_{N_{R}}+\widetilde{\bGamma}_{R}$ is a strictly positive definite matrix even if $\widetilde{\bGamma}_{R}\succeq 0$, and so it must hold 
\begin{equation}
\lambda_{min}(\epsilon\bI_{N_{R}}+\widetilde{\bGamma}_{R}-\bgamma\bgamma^{H})>0\;.
\end{equation}
for any $\epsilon>0$. Thus, for any feasible point of \eqref{Prob:Gamma_R_MIMO_3} it must hold that $\widetilde{\bGamma}_{R}-\bgamma\bgamma^{H}\succeq \bzero$.
Then, assuming without loss of generality that the eigenvalues of $\widetilde{\bGamma}_{R}$, say $\lambda_{1},\ldots,\lambda_{N}$, are ordered in decreasing order of magnitude, \eqref{Prob:Gamma_R_MIMO_3c} implies that 
\begin{equation}
\lambda_{1}\geq \|\bgamma\|^{2}\;,
\end{equation}
whereas \eqref{Prob:Gamma_R_MIMO_3d} requires that 
\begin{equation}
\sum_{i=1}^{N}\lambda_{i}=\lambda_{1}+\sum_{i=2}^{N}\lambda_{i}\geq \|\bgamma\|^{2}
\end{equation}
Thus, since $\lambda_{i}\geq 0$ for all $i=1,\ldots,N$, \eqref{Prob:Gamma_R_MIMO_3c} and \eqref{Prob:Gamma_R_MIMO_3d} together imply that $\lambda_{1}=\|\bgamma\|^{2}$ and $\lambda_{i}=0$ for all $i=2,\ldots,N$. Hence, the thesis. 
\end{IEEEproof}
Based on Proposition \ref{Prop:Rank1_R}, Problems \eqref{Prob:Gamma_R_MIMO_3} and \eqref{Prob:Gamma_R_MIMO_2} are equivalent. However, Problem \eqref{Prob:Gamma_R_MIMO_3} is not convex, yet, because the constraint in \eqref{Prob:Gamma_R_MIMO_3d} is not convex, since $- \|\bgamma_{R}\|^{2}$ is concave in $\bgamma_{R}$. Nevertheless, Problem \eqref{Prob:Gamma_R_MIMO_3} can be tackled by the sequential optimization framework. To this end, we need to find a convex upper-bound of the constraint function in \eqref{Prob:Gamma_R_MIMO_3d}, which can be accomplished observing that the term $\|\bgamma_{R}\|^{2}$ is convex, and thus can be lower-bounded by its first-order Taylor expansion around any feasible point $\bgamma_{R,0}$. To elaborate, for any feasible $\bgamma_{R,0}$, it holds $\|\bgamma_{R}\|^{2}\geq 2\Re\{\bgamma_{R,0}^{H}\bgamma_{R}\}-\|\bgamma_{R,0}^{H}\|^{2}$, and thus
\begin{align}
\tr(\widetilde{\bGamma}_{R})-\|\bgamma_{R}\|^{2}\leq \tr(\widetilde{\bGamma}_{R})+\|\bgamma_{R,0}^{H}\|^{2}-2\Re\{\bgamma_{R,0}^{H}\bgamma_{R}\}\;.
\end{align}
Then, Problem \eqref{Prob:Gamma_R_MIMO_3} can be replaced by the following surrogate problem, which fulfills all assumptions of the sequential optimization framework
\begin{subequations}\label{Prob:Gamma_R_MIMO_4}
\begin{align}
&\ds\max_{\footnotesize \widetilde{\bGamma}_{R}\succeq \bzero, \bgamma_{R}}\log_{2}\left|\bI_{N_{R}}+\sum_{m=1}^{M_{R}}\frac{\lambda_{m}}{\sigma^{2}}\bR_{m}\widetilde{\bGamma}_{R}\bR_{m}^{H}\right|\label{Prob:Gamma_R_MIMO_4a}\\
&\;\text{s.t.}\;\tr\left(\bD_{R}\widetilde{\bGamma}_{R}\right)\leq \tr(\bB)\label{Prob:Gamma_R_MIMO_4b} \\
&\;\quad\;\;\left[\begin{array}{ll}
\widetilde{\bGamma}_{R} & \bgamma_{R}\\
\bgamma_{R}^{H} & 1
\end{array}\right]\succeq \bzero\label{Prob:Gamma_R_MIMO_4c}\\
&\;\quad\;\;\tr(\widetilde{\bGamma}_{R})+\bgamma_{R,0}^{H}\bgamma_{R,0}-2\Re\{\bgamma_{R,0}^{H}\bgamma_{R}\}\leq 0\label{Prob:Gamma_R_MIMO_4d}\;.
\end{align}
\end{subequations}
Problem \eqref{Prob:Gamma_R_MIMO_4} is convex and thus can be solved with polynomial complexity by convex optimization methods. Then, the algorithm for Problem \eqref{Prob:Gamma_R_MIMO} can be stated as in Algorithm \ref{Alg:SFP_GammaR_MIMO}, with \texttt{Obj}$(\bgamma_{R})$ denoting \eqref{Prob:Gamma_R_MIMO_4a} evaluated at $\bgamma_{R}$. Algorithm \ref{Alg:SFP_GammaR_MIMO} is an instance of the SFP method, and thus is guaranteed to monotonically improve  \eqref{Prob:Gamma_R_MIMOa} and converges to a first-order optimal point of Problem \eqref{Prob:Gamma_R_MIMO} \cite{ZapNow15}. 
\begin{algorithm}\caption{SFP for Problem \eqref{Prob:Gamma_R_MIMO}}
\label{Alg:SFP_GammaR_MIMO}
\begin{algorithmic}
\STATE $\varepsilon>0$; 
\STATE \texttt{Select any feasible} $\bgamma_{R,0}$;
\REPEAT
\STATE \texttt{Let} $\bgamma_{R}$ \texttt{ be the solution of Problem \eqref{Prob:Gamma_R_MIMO_4}};
\STATE $T=\|\texttt{Obj}(\bgamma_{R})-\texttt{Obj}(\bgamma_{R,0})\|$; 
\STATE $\bgamma_{R,0}=\bgamma_{R}$;
\UNTIL{$T\leq\varepsilon$}
\end{algorithmic}
\end{algorithm}

\subsection{Optimization of $\bGamma_{T}$} 
$\bGamma_{T}$ can be optimized similarly as done for $\bGamma_{R}$. Defining $\bA=\bH\bQ\bH^{H}$, $\bF=\bG\bGamma_{R}\bC$, and $\bZ=\bGamma_{R}\bC$, the problem becomes
\begin{subequations}\label{Prob:Gamma_T_MIMO}
\begin{align}
&\ds\max_{\footnotesize \bGamma_{T}}B\log_{2}\left|\bI_{N_{R}}+\frac{1}{\sigma^{2}}\bF\bGamma_{T}\bA\bGamma_{T}^{H}\bF^{H}\right|\label{Prob:Gamma_T_MIMOa}\\
&\;\text{s.t.}\;\tr\left(\bGamma_{T}\bA\bGamma_{T}^{H}\right)\leq \tr(\bA) \label{Prob:Gamma_T_MIMOb}\\
&\;\quad\;\;\tr\left(\bZ\bGamma_{T}\bA\bGamma_{T}^{H}\bZ^{H}\right)-\tr\left(\bC\bGamma_{T}\bA\bGamma_{T}^{H}\bC^{H}\right)\leq 0\label{Prob:Gamma_T_MIMOc}
\end{align}
\end{subequations}
By the eigenvalue decomposition $\bA=\sum_{m=1}^{M_{T}}\beta_{m}\bv_{m}\bv_{m}^{H}$, \eqref{Prob:Gamma_T_MIMOa} becomes
\begin{align}
C&=B\log_{2}\left|\bI_{N_{R}}+\bF\bGamma_{T}\left(\textstyle\sum_{m=1}^{M_{T}}\frac{\beta_{m}}{\sigma^{2}}\bv_{m}\bv_{m}^{H}\right)\bGamma_{T}^{H}\bF^{H}\right|\notag\\
&=B\log_{2}\left|\bI_{N_{R}}+\textstyle\sum_{m=1}^{M_{T}}\frac{\beta_{m}}{\sigma^{2}}\bF\bV_{m}\bgamma_{T}\bgamma_{T}^{H}\bV_{m}^{H}\bF^{H}\right|\notag\\
&=B\log_{2}\left|\bI_{N_{R}}+\textstyle\sum_{m=1}^{M_{T}}\frac{\beta_{m}}{\sigma^{2}}\bS_{m}\bgamma_{T}\bgamma_{T}^{H}\bS_{m}^{H}\right|\;,
\end{align}
with $\bV_{m}=\text{diag}(\bv_{m})$ and $\bS_{m}=\bF\bV_{m}$, for $m=1,\ldots,M_{T}$. Next, the left-hand-side of \eqref{Prob:Gamma_T_MIMOb} can be expressed as 
\begin{align}
&\tr(\bGamma_{T}\bA\bGamma_{T}^{H})=\tr\left(\bGamma_{T}\left(\textstyle\sum_{m=1}^{M_{T}}\beta_{m}\bv_{m}\bv_{m}^{H}\right)\bGamma_{T}^{H}\right)\\
&=\textstyle\sum_{m=1}^{M_{T}}\beta_{m}\tr\left(\bGamma_{T}\bv_{m}\bv_{m}^{H}\bGamma_{T}^{H}\right)\!=\!\sum_{m=1}^{M_{T}}\beta_{m}\tr\left(\bV_{m}\bgamma_{T}\bgamma_{T}^{H}\bV_{m}^{H}\right)\notag\\
&=\bgamma_{T}^{H}\left(\textstyle\sum_{m=1}^{M_{T}}\beta_{m}\bV_{m}^{H}\bV_{m}\right)\bgamma_{T}^{H}=\tr\left(\bD_{T}\bgamma_{T}\bgamma_{T}^{H}\right)\;,\notag
\end{align}
with $\bD_{T}\!=\!\sum_{m=1}^{M_{T}}\beta_{m}\bV_{m}^{H}\bV_{m}$. Similarly, \eqref{Prob:Gamma_T_MIMOc} becomes
\begin{equation}
\tr\left(\bE_{T,1}\bgamma_{T}^{H}\bgamma_{T}\right)-\tr\left(\bE_{T,2}\bgamma_{T}^{H}\bgamma_{T}\right)\leq0\;,
\end{equation}
with $\bE_{T,1}=\sum_{m=1}^{M_{T}}\beta_{m}\bZ^{H}\bV_{m}^{H}\bV_{m}\bZ$, $\bE_{T,2}=\sum_{m=1}^{M_{T}}\beta_{m}\bC^{H}\bV_{m}^{H}\bV_{m}\bC$.
Then, Problem \eqref{Prob:Gamma_T_MIMO} becomes 
\begin{subequations}\label{Prob:Gamma_T_MIMO_2}
\begin{align}
&\ds\max_{\footnotesize \bGamma_{T}}\log_{2}\left|\bI_{N_{R}}+\textstyle\sum_{m=1}^{M_{T}}\frac{\beta_{m}}{\sigma^{2}}\bS_{m}\bgamma_{T}\bgamma_{T}^{H}\bS_{m}^{H}\right|\label{Prob:Gamma_T_MIMO_2a}\\
&\;\text{s.t.}\;\tr\left(\bD_{T}\bgamma_{T}\bgamma_{T}^{H}\right)\leq \tr(\bA) \label{Prob:Gamma_T_MIMO_2b}\\
&\;\quad\;\; \tr\left(\bE_{T,1}\bgamma_{T}\bgamma_{T}^{H}\right)-\tr\left(\bE_{T,2}\bgamma_{T}\bgamma_{T}^{H}\right)\leq 0\label{Prob:Gamma_T_MIMO_2c}
\end{align}
\end{subequations}
Defining $\widetilde{\bGamma}_{T}=\bgamma_{T}\bgamma_{T}^{H}$, Problem \eqref{Prob:Gamma_T_MIMO_2} can be reformulated as 
\begin{subequations}\label{Prob:Gamma_T_MIMO_3}
\begin{align}
&\ds\max_{\footnotesize \widetilde{\bGamma}_{T}\succeq \bzero, \bgamma_{T}}\log_{2}\left|\bI_{N_{R}}+\textstyle\sum_{m=1}^{M_{T}}\frac{\beta_{m}}{\sigma^{2}}\bS_{m}\widetilde{\bGamma}_{T}\bS_{m}^{H}\right|\label{Prob:Gamma_T_MIMO_3a}\\
&\;\text{s.t.}\;\tr\left(\bD_{T}\widetilde{\bGamma}_{T}\right)\leq \tr(\bA)\label{Prob:Gamma_T_MIMO_3b} \\
&\;\quad\;\;\tr\left(\bE_{T,1}\widetilde{\bGamma}_{T}\right)-\tr\left(\bE_{T,2}\widetilde{\bGamma}_{T}\right)\leq 0\label{Prob:Gamma_T_MIMO_3e}\\ 
&\;\quad\;\;\left[\begin{array}{ll}
\widetilde{\bGamma}_{T} & \bgamma_{T}\\
\bgamma_{T}^{H} & 1
\end{array}\right]\succeq \bzero\label{Prob:Gamma_T_MIMO_3c}\\
&\;\quad\;\;\tr(\widetilde{\bGamma}_{T})-\|\bgamma_{T}\|^{2}\leq 0\label{Prob:Gamma_T_MIMO_3d}\;.
\end{align}
\end{subequations}
\begin{proposition}\label{Prop:Rank1_T}
Let $(\widetilde{\bGamma}_{T}^{\star},\bgamma_{T}^{\star})$ be a solution of Problem \eqref{Prob:Gamma_T_MIMO_3}. Then, $\text{rank}(\widetilde{\bGamma}_{T}^{\star})=1$.
\end{proposition}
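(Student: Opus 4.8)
The plan is to transcribe, essentially line by line, the argument proving Proposition~\ref{Prop:Rank1_R}, since the constraints \eqref{Prob:Gamma_T_MIMO_3c} and \eqref{Prob:Gamma_T_MIMO_3d} have exactly the structure of \eqref{Prob:Gamma_R_MIMO_3c} and \eqref{Prob:Gamma_R_MIMO_3d}, with the pair $(\widetilde{\bGamma}_{R},\bgamma_{R})$ replaced by $(\widetilde{\bGamma}_{T},\bgamma_{T})$. The only additional constraint present here, \eqref{Prob:Gamma_T_MIMO_3e}, is a linear inequality in $\widetilde{\bGamma}_{T}$ alone; it does not involve $\bgamma_{T}$ and plays no role in the rank count, so it can be ignored throughout the argument (it only shrinks the feasible set, which is irrelevant for a statement that holds at \emph{every} feasible point).

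First I would establish that every feasible point of \eqref{Prob:Gamma_T_MIMO_3} satisfies $\widetilde{\bGamma}_{T}-\bgamma_{T}\bgamma_{T}^{H}\succeq\bzero$. Since the $(2,2)$ block of the matrix in \eqref{Prob:Gamma_T_MIMO_3c} equals $1>0$, this Schur complement is well-defined; for the nonsingular case $\widetilde{\bGamma}_{T}\succ\bzero$ it follows immediately that \eqref{Prob:Gamma_T_MIMO_3c} is equivalent to $\widetilde{\bGamma}_{T}-\bgamma_{T}\bgamma_{T}^{H}\succeq\bzero$. For the boundary case $\widetilde{\bGamma}_{T}\succeq\bzero$ but singular, I would argue by contradiction exactly as in Proposition~\ref{Prop:Rank1_R}: if $\lambda_{min}(\widetilde{\bGamma}_{T}-\bgamma_{T}\bgamma_{T}^{H})<0$, then by continuity of the minimum eigenvalue there is $\epsilon>0$ with $\lambda_{min}(\epsilon\bI+\widetilde{\bGamma}_{T}-\bgamma_{T}\bgamma_{T}^{H})<0$, contradicting the fact that $\epsilon\bI+\widetilde{\bGamma}_{T}\succ\bzero$ forces the Schur complement of the (perturbed, hence still positive semi-definite) block matrix in \eqref{Prob:Gamma_T_MIMO_3c} to be positive definite.

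Next, ordering the eigenvalues of $\widetilde{\bGamma}_{T}$ as $\lambda_{1}\geq\cdots\geq\lambda_{M_{T}}\geq0$, the relation $\widetilde{\bGamma}_{T}\succeq\bgamma_{T}\bgamma_{T}^{H}$, evaluated through the Rayleigh quotient along the direction $\bgamma_{T}$ itself, yields $\lambda_{1}\geq\|\bgamma_{T}\|^{2}$, while constraint \eqref{Prob:Gamma_T_MIMO_3d} gives $\sum_{i}\lambda_{i}\leq\|\bgamma_{T}\|^{2}$. Chaining the two bounds, $\lambda_{1}+\sum_{i\geq2}\lambda_{i}\leq\|\bgamma_{T}\|^{2}\leq\lambda_{1}$, and since all $\lambda_{i}$ are nonnegative this is possible only if $\lambda_{1}=\|\bgamma_{T}\|^{2}$ and $\lambda_{i}=0$ for $i\geq2$, i.e. $\text{rank}(\widetilde{\bGamma}_{T}^{\star})=1$. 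I do not anticipate any genuine obstacle, since the proof is a direct copy of Proposition~\ref{Prop:Rank1_R}; the only point requiring a little care is the singular case $\widetilde{\bGamma}_{T}\succeq\bzero$, where the Schur complement equivalence has to be justified through the perturbation-plus-continuity step above, and this is handled exactly as in the proof of Proposition~\ref{Prop:Rank1_R}.
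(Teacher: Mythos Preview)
Your proposal is correct and matches the paper's approach exactly: the paper's proof of Proposition~\ref{Prop:Rank1_T} simply states that it follows as in Proposition~\ref{Prop:Rank1_R}, and you have faithfully transcribed that argument with the appropriate substitutions, correctly noting that the extra linear constraint \eqref{Prob:Gamma_T_MIMO_3e} is irrelevant to the rank conclusion. As a minor remark, your own observation that the $(2,2)$ block equals $1>0$ already makes the Schur complement $\widetilde{\bGamma}_{T}-\bgamma_{T}\bgamma_{T}^{H}\succeq\bzero$ valid without any positive-definiteness assumption on $\widetilde{\bGamma}_{T}$, so the perturbation-plus-continuity step (which the paper uses) is in fact unnecessary.
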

\begin{IEEEproof}
The proof follows similarly as Proposition \eqref{Prop:Rank1_R}.
\end{IEEEproof}
The non-convexity of the constraint function in \eqref{Prob:Gamma_T_MIMO_3d} can be addressed by sequential programming upon performing the first-order Taylor expansion of $\|\bgamma_{T}\|^{2}$ around any feasible $\bgamma_{T,0}$, which yields
\begin{align}
\tr(\widetilde{\bGamma}_{T})-\|\bgamma_{t}\|^{2}\leq \tr(\widetilde{\bGamma}_{T})+\|\bgamma_{T,0}\|^{2}-2\Re\{\bgamma_{T,0}^{H}\bgamma_{T}\}\;.
\end{align}
Then, the surrogate problem for the implementation of sequential programming can be obtained as
\begin{subequations}\label{Prob:Gamma_T_MIMO_4}
\begin{align}
&\ds\max_{\footnotesize \widetilde{\bGamma}_{T}\succeq \bzero, \bgamma_{T}}\log_{2}\left|\bI_{N_{R}}+\textstyle\sum_{m=1}^{M_{T}}\frac{\beta_{m}}{\sigma^{2}}\bS_{m}\widetilde{\bGamma}_{T}\bS_{m}^{H}\right|\label{Prob:Gamma_T_MIMO_4a}\\
&\;\text{s.t.}\;\tr\left(\bD_{T}\widetilde{\bGamma}_{T}\right)\leq \tr(\bA)\label{Prob:Gamma_T_MIMO_4b} \\
&\;\quad\;\;\tr\left(\bE_{T,1}\widetilde{\bGamma}_{T}\right)-\tr\left(\bE_{T,2}\widetilde{\bGamma}_{T}\right)\leq 0\label{Prob:Gamma_T_MIMO_4e}\\ 
&\;\quad\;\;\left[\begin{array}{ll}
\widetilde{\bGamma}_{T} & \bgamma_{T}\\
\bgamma_{T}^{H} & 1
\end{array}\right]\succeq \bzero\label{Prob:Gamma_T_MIMO_4c}\\
&\;\quad\;\;\tr(\widetilde{\bGamma}_{T})+\bgamma_{T,0}^{H}\bgamma_{T,0}-2\Re\{\bgamma_{T,0}^{H}\bgamma_{T}\}\leq 0\label{Prob:Gamma_R_MIMO_4d}\;,
\end{align}
\end{subequations}
which is a convex problem that can be solved by standard convex optimization tools. Thus, the sequential programming framework for  Problem \eqref{Prob:Gamma_T_MIMO} can be stated as in Algorithm \ref{Alg:SFP_GammaT_MIMO}, with \texttt{Obj}$(\bgamma_{T})$ denoting \eqref{Prob:Gamma_R_MIMO_4a} evaluated at $\bgamma_{T}$. Being an instance of the SFP method, Algorithm \ref{Alg:SFP_GammaT_MIMO} is guaranteed to monotonically improve \eqref{Prob:Gamma_T_MIMOa} and converges to a first-order optimal point of Problem \eqref{Prob:Gamma_T_MIMO} \cite{ZapNow15}. 
\begin{algorithm}\caption{SFP for Problem \eqref{Prob:Gamma_T_MIMO}}
\label{Alg:SFP_GammaT_MIMO}
\begin{algorithmic}
\STATE $\varepsilon>0$; 
\STATE \texttt{Select any feasible} $\bgamma_{T,0}$;
\REPEAT
\STATE \texttt{Let} $\bgamma_{T}$ \texttt{ be the solution of Problem \eqref{Prob:Gamma_R_MIMO_4}};
\STATE $T=\|\texttt{Obj}(\bgamma_{T})-\texttt{Obj}(\bgamma_{T,0})\|$; 
\STATE $\bgamma_{T,0}=\bgamma_{T}$;
\UNTIL{$T\leq\varepsilon$}
\end{algorithmic}
\end{algorithm}

\subsection{Optimization of $\bQ$}
Let us define $\bZ_{1}=\frac{1}{\sigma}\bG\bGamma_{R}\bC\bGamma_{T}\bH$, $\bZ_{2}=\bGamma_{R}\bC\bGamma_{T}\bH$, $\bZ_{3}=\bC\bGamma_{T}\bH$, and $\bZ_{4}=\bGamma_{T}\bH$. Then the problem to be solved can be stated as
\begin{subequations}\label{Prob:MIMO_Q}
\begin{align}
&\max_{\bQ}\frac{\log_{2}\left|\bI_{N_{R}}+\bZ_{1}\bQ\bZ_{1}^{H}\right|}{\mu\tr(\bQ)+P_{c}}\label{Prob:MIMO_Qa}\\
&\;\textrm{s.t.}\;\tr(\bZ_{4}\bQ\bZ_{4}^{H})-\tr(\bH\bQ\bH^{H})\leq 0\label{Prob:MIMO_Qb}\\
&\;\quad\;\;\tr(\bZ_{2}\bQ\bZ_{2}^{H})-\tr(\bZ_{3}\bQ\bZ_{3}^{H})\leq 0\label{Prob:MIMO_Qc}
\end{align}
\end{subequations}
Problem \eqref{Prob:MIMO_Q} is a pseudo-concave maximization subject to linear constraints. Thus, it can be globally solved with polynomial complexity by any fractional programming technique, e.g. Dinkelbach's algorithm\cite{ZapNow15}. We observe, however, that due to the presence of the global power constraints \eqref{Prob:MIMO_Qb} and \eqref{Prob:MIMO_Qc}, it is not possible to diagonalize both the objective function and constraints. Thus, Problem \eqref{Prob:MIMO_Q} must be solved directly with respect to the matrix $\bQ$.  

\subsection{Overall algorithm and computational complexity}
Finally, the alternating maximization algorithm to tackle Problem \eqref{Prob:EE_MIMO} is formulated as in Algorithm \ref{Alg:AO_MIMO}.
\begin{algorithm}\caption{Alternating maximization for Problem \eqref{Prob:EE_MIMO}}
\label{Alg:AO_MIMO}
\begin{algorithmic}
\STATE $\varepsilon>0$;  \texttt{Select any feasible} $\bQ_{0}$, $\bGamma_{T,0}$, $\bGamma_{R,0}$;
\REPEAT
\STATE \texttt{Let} $\bGamma_{R}$ \texttt{ be the output of Algorithm \ref{Alg:SFP_GammaR_MIMO}, given $\bGamma_{T}$, and $\bQ$};
\STATE \texttt{Let} $\bGamma_{T}$ \texttt{ be the output of Algorithm \ref{Alg:SFP_GammaT_MIMO}, given  $\bGamma_{R}$, and $\bQ$};
\STATE \texttt{Let} $\bq$ \texttt{be the solution of Problem \eqref{Prob:MIMO_Q}, given  $\bGamma_{R}$, and $\bGamma_{T}$}; 
\STATE $T=|\text{EE}((\bGamma_{T},\bGamma_{R}),\bQ)-\text{EE}((\bGamma_{T,0},\bGamma_{R,0}),\bQ_{0})|$; 
\STATE $\bGamma_{T}=\bGamma_{T,0}$, $\bGamma_{R}=\bGamma_{R,0}$, $\bQ=\bQ_{0}$;
\UNTIL{$T\leq\varepsilon$}
\end{algorithmic}
\end{algorithm}

\subsubsection{Convergence} Algorithm \ref{Alg:AO_MIMO} is provably convergent, as shown in the next proposition.  
\begin{proposition}\label{Prop:Convergence}
Algorithm \ref{Alg:AO_MIMO} monotonically improves the value of \eqref{Prob:aEE_MIMO} and converges in the value of the objective.
\end{proposition}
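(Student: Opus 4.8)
The plan is to mirror the proof of the analogous convergence statement for Algorithm \ref{Alg:AO_SIMO}: first show that each pass of Algorithm \ref{Alg:AO_MIMO} returns a point that is feasible for Problem \eqref{Prob:EE_MIMO} and whose EE is no smaller than that of the previous iterate, and then argue that the resulting monotone sequence of EE values is bounded above, hence convergent.

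For the monotonicity part, I would treat the three inner updates in turn. When $\bGamma_{R}$ is refreshed by Algorithm \ref{Alg:SFP_GammaR_MIMO} with $\bGamma_{T}$ and $\bQ$ frozen, the denominator $\mu\tr(\bQ)+P_{c}$ of \eqref{Prob:aEE_MIMO} is independent of $\bGamma_{R}$, so maximizing the rate \eqref{Prob:Gamma_R_MIMOa} is the same as maximizing the EE over $\bGamma_{R}$; since Algorithm \ref{Alg:SFP_GammaR_MIMO} is an SFP instance started from the current feasible $\bGamma_{R}$, it does not decrease \eqref{Prob:Gamma_R_MIMOa}, keeps \eqref{Prob:dEE_MIMO} satisfied through \eqref{Prob:Gamma_R_MIMOb}, and leaves \eqref{Prob:cEE_MIMO} — which does not involve $\bGamma_{R}$ — untouched. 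The $\bGamma_{T}$ update via Algorithm \ref{Alg:SFP_GammaT_MIMO} is handled identically, now enforcing both \eqref{Prob:cEE_MIMO} and \eqref{Prob:dEE_MIMO} via \eqref{Prob:Gamma_T_MIMOb} and \eqref{Prob:Gamma_T_MIMOc}. Finally, the $\bQ$ update solves Problem \eqref{Prob:MIMO_Q}, whose objective is exactly the EE viewed as a function of $\bQ$ and whose feasible set contains the incumbent $\bQ$; being a pseudo-concave program solved to global optimality, it cannot decrease the EE. Composing the three steps shows the EE is nondecreasing along the iterations, and since each step keeps all constraints of \eqref{Prob:EE_MIMO} satisfied, feasibility is preserved throughout.

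For convergence, it then suffices to show that \eqref{Prob:aEE_MIMO} is bounded above on the feasible set. The numerator $C$ depends on $\bGamma_{R}$ only through $\bG\bGamma_{R}\bB\bGamma_{R}^{H}\bG^{H}$, and $\tr(\bGamma_{R}\bB\bGamma_{R}^{H})\le\tr(\bB)$ by \eqref{Prob:Gamma_R_MIMOb}, so for a fixed $\bQ$ this contribution stays bounded; the analogous bound holds for $\bGamma_{T}$ through \eqref{Prob:Gamma_T_MIMOb}. With respect to $\bQ$, the numerator grows at most logarithmically in $\tr(\bQ)$ whereas the denominator grows linearly, so the ratio tends to zero as $\tr(\bQ)\to\infty$ and is therefore bounded over all $\bQ\succeq\bzero$. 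Hence the EE is bounded above, and a monotone bounded sequence converges, giving convergence of Algorithm \ref{Alg:AO_MIMO} in the value of the objective.

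I expect the boundedness-above step to be the only delicate point. Unlike the transmit power, the metasurface matrices are restricted only by the \emph{global} reflection inequalities, which do not bound $\bGamma_{T}$ and $\bGamma_{R}$ entrywise and constrain only their action on the range of the relevant Gram matrices; the argument must therefore lean on the fact that $C$ depends on the metasurfaces precisely through those controlled quadratic forms, so the rate stays bounded even though $\bGamma_{T},\bGamma_{R}$ need not lie in a compact set. Everything else reduces to composing the monotonicity guarantees already established for Algorithms \ref{Alg:SFP_GammaR_MIMO} and \ref{Alg:SFP_GammaT_MIMO} and the global optimality of the fractional-programming solver for Problem \eqref{Prob:MIMO_Q}.
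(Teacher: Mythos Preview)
Your proposal is correct and follows essentially the same approach as the paper's proof: monotonicity via the SFP guarantees of Algorithms~\ref{Alg:SFP_GammaR_MIMO} and~\ref{Alg:SFP_GammaT_MIMO} together with global optimality of the fractional program for $\bQ$, and convergence via upper-boundedness of the EE. Your boundedness argument is in fact more careful than the paper's, which simply invokes the single-stream case and asserts that $\bGamma_{T},\bGamma_{R}$ are ``bounded by norm constraints''; you correctly note that the global reflection constraints do not bound the metasurface matrices entrywise and that the rate stays bounded only because it depends on them through the constrained quadratic forms.
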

\begin{IEEEproof}
Algorithm \ref{Alg:AO_MIMO} alternatively optimizes $\bGamma_{R}$ as the output of Algorithm \ref{Alg:SFP_GammaR_MIMO},  $\bGamma_{T}$ as the output of Algorithm \ref{Alg:SFP_GammaT_MIMO}, and $\bQ$ as the solution of Problem \eqref{Prob:MIMO_Q}. Both Algorithms \ref{Alg:SFP_GammaR_MIMO} and \ref{Alg:SFP_GammaT_MIMO} are instances of the SFP method, and thus monotonically increase the value of the objective of \eqref{Prob:aEE_MIMO}. As for $\bQ$, as already mentioned, Problem \eqref{Prob:MIMO_Q} is optimally solved by standard fractional programming tools. Thus, after solving each subproblem of Algorithm \ref{Alg:AO_MIMO}, the objective in \eqref{Prob:aEE_MIMO} is not decreased. Moreover, similarly to the single-stream scenario, \eqref{Prob:aEE_MIMO} is upper-bounded. Thus it can not increase indefinitely and Algorithm \ref{Alg:AO_SIMO} converges in the value of the objective. 
\end{IEEEproof}
\begin{remark}\label{Rem:RateMax_MIMO}
Algorithm \ref{Alg:AO_MIMO} can be readily specialized to perform capacity maximization instead of EE maximization, by simply setting $\mu=0$. Indeed, this reduces the denominator of \eqref{Prob:aEE_MIMO} to a constant. 
\end{remark}
\subsubsection{Computational complexity}The computational complexity of Algorithm~\ref{Alg:AO_MIMO} is due to running Algorithms \ref{Alg:SFP_GammaR_MIMO} and \ref{Alg:SFP_GammaT_MIMO}, and solving Problem \eqref{Prob:MIMO_Q} in each iteration. Algorithm \ref{Alg:SFP_GammaR_MIMO} requires the solution of a concave problem with $M_{R}+M_{R}(M_{R}+1)/2$ variables. Thus, since the complexity of a convex problem with $n$ variables is upper-bounded, with respect to $n$, by $n^{4}$ \cite{BenTal2001ConvexOptimization}, the complexity of Algorithm \ref{Alg:SFP_GammaR_MIMO} is upper-bounded by 
\beq
{\mathcal C}_{R}=\mathcal{O}\left(I_{3}\left(M_{R}+\frac{M_{R}(M_{R}+1)}{2}\right)^{4}\right)\;,
\eeq
with $I_{3}$ the number of iterations for Algorithm \ref{Alg:SFP_GammaR_MIMO} to converge. Similarly, Algorithm \ref{Alg:SFP_GammaT_MIMO} requires the solution of a concave problem with $M_{T}+M_{T}(M_{T}+1)/2$, and, thus, its complexity can be upper-bounded by 
\beq
{\mathcal C}_{T}=\mathcal{O}\left(I_{4}\left(M_{T}+\frac{M_{T}(M_{T}+1)}{2}\right)^{4}\right)\;,
\eeq
with $I_{4}$ the number of iterations for Algorithm \ref{Alg:SFP_GammaT_MIMO} to converge. Instead, Problem \eqref{Prob:MIMO_Q} is a fractional maximization whose objective has a concave numerator and a convex denominator. Thus, recalling that a fractional function with $m$ variables, concave numerator, and convex denominator, can be maximized subject to convex constraint with a complexity equivalent to that of a convex problem with $m+1$ variables \cite{ZapNow15}, the complexity of Problem \eqref{Prob:MIMO_Q} can be upper-bounded by 
\beq
{\mathcal C}_{Q}=\mathcal{O}\left(\frac{N_{T}(N_{T}+1)}{2}^{4}\right)
\eeq
Finally, the overall complexity of Algorithm  \ref{Alg:AO_MIMO} can be upper-bounded by 
\beq
{\mathcal C}_{5}=I_{5}\left({\mathcal C}_{R}+{\mathcal C}_{T}+{\mathcal C}_{Q}\right)\;,
\eeq
with $I_{5}$ the number of iterations for Algorithm  \ref{Alg:AO_MIMO} to converge.

\section{Numerical Analysis} \label{Sec:NUM_ANA}
For our numerical analysis, we consider the setup described in Section~\ref{Sec:SysModel}. The carrier frequency is $f_{c}=3.5\,\textrm{GHz}$ and the bandwidth $B=20\,\textrm{MHz}$. The transmitter and receiver are $100\,\textrm{m}$ apart and have $N_{T}=4$ and $N_{R}=4$ antennas. The transmit and receive WsRHS have $M_{T}=100$ and $M_{R}=100$ elements. The channels matrices $\bH$ and $\bG$ follow the spherical wave model from Section~\ref{Sec:SysModel}, while the entries of the channel matrix $\bC$ follow a Rice fading model with Rice factor $10$. Moreover, $P_{c,RHS}=0\,\textrm{dBm}$, $P_{c,a}=34\,\textrm{dBm}$, while  $P_{c,0}=37\,\textrm{dBm}$, $\mu=1$. The noise power spectral density is $-174\,\textrm{dBm/Hz}$, and the receive noise figure is $5\,\textrm{dB}$.  The first set of figures, namely Figs. ~\ref{fig:EE}-\ref{fig:EH}, consider the case in which a single data-stream is transmitted. This applies, for example, to a cellular communication in which either the transmitter or the receiver have a single antenna, as it is typically the case for mobile terminals in a cellular network. Next, Figs.~\ref{fig:EM}-\ref{fig:EE_000} present numerical results for the case of multi-stream transmission. 

Fig.~\ref{fig:EE} shows the EE versus the maximum available transmit power $P_{max}$ obtained by the following resource allocation policies:
\begin{itemize}
\item EE maximization by Algorithm \ref{Alg:AO_SIMO}.
\item Capacity maximization by specializing Algorithm \ref{Alg:AO_SIMO} as described in Remark \ref{Rem:RateMaxSS}.
\item EE maximization assuming $N_{T}=N_{R}=1$, as described in Section \ref{Sec:SISO}.
\item Capacity maximization with $N_{T}=N_{R}=1$, specializing the approach from Section \ref{Sec:SISO} as described in Remark \ref{Rem:Cap_SISO}.
\item EE maximization assuming the two WsRHSs are not present, and $N_{T}=N_{R}=64$. 
\item Capacity maximization assuming the two WsRHSs are not present, and $N_{T}=N_{R}=64$. 
\end{itemize}
The results indicate that the highest EE value is obtained when a single antenna is used at both the transmitter and receiver, followed closely by the scenario in which $N_{T}=N_{R}=4$. Instead, when the size of the digital antenna array is increased to $N_{T}=N_{R}=64$, a significant decrease of the EE is observed. This is expected, since the power consumption of a digital radio-frequency chain is higher than a single reflecting element of the WsRHSs. 
\begin{figure}[!h]
	\centering
	\includegraphics[width=0.5\textwidth]{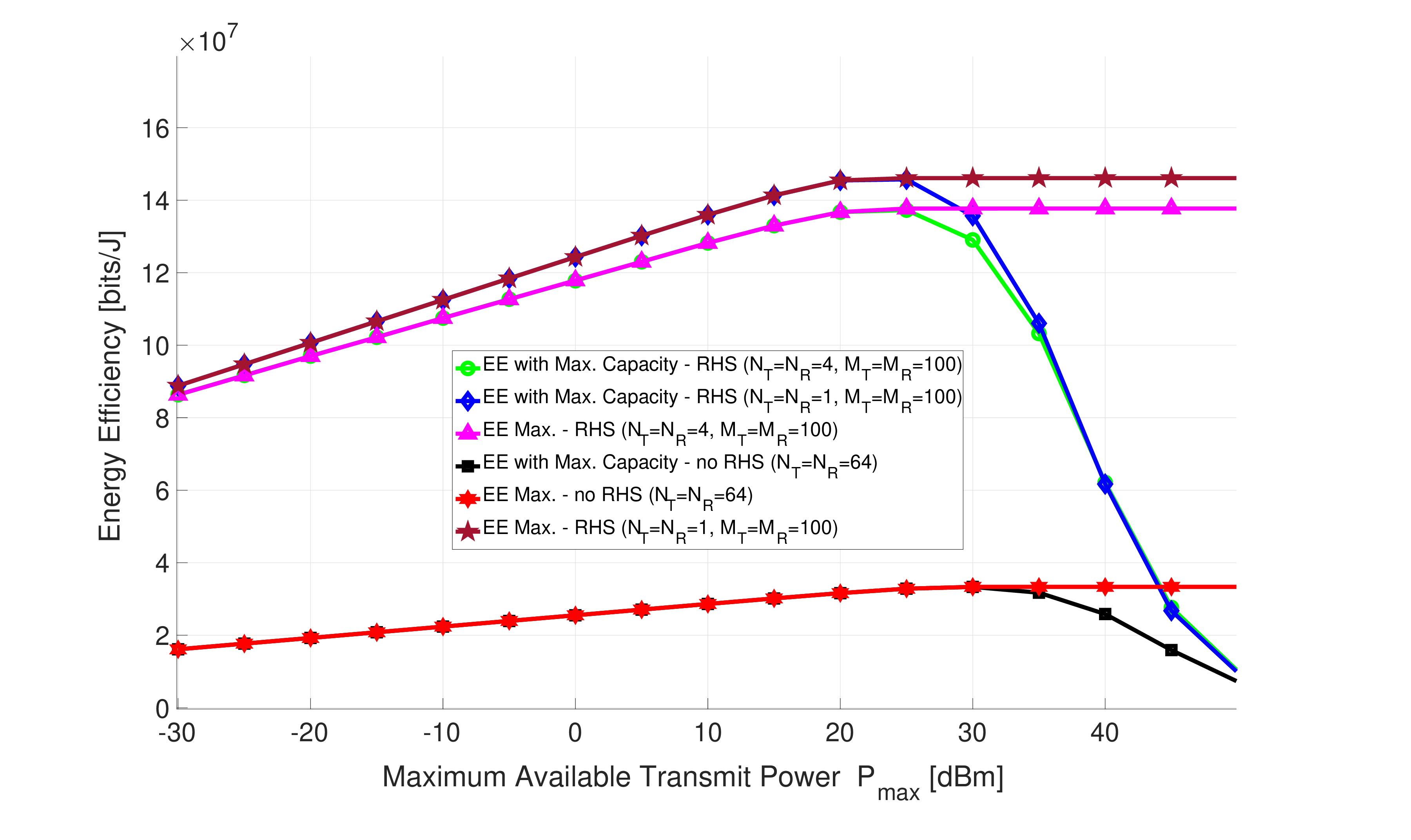}\caption{Achieved EE versus Maximum Available Transmit Power $P_{max}$ for Single Stream} \label{fig:EE}
\end{figure}
The same scenarios are considered in Fig.~\ref{fig:SE}, but the reported metric is the achieved capacity rather than the EE. The results show that using two WsRHSs provides satisfactory capacity levels, even though the higher capacity is obtained when $N_{T}=N_{R}=4$ and not when a single-antenna is used. On the other hand, using a large number of radio-frequency chains does not lead to better capacity values. This is explained recalling that the considered scenario assumes a single-stream transmission. Thus, no multiplexing gain is obtained by increasing the values of $N_{T}$ and $N_{R}$, but only an array gain, which is also provided by the WsRHSs. 

\begin{figure}[!h]
	\centering
	\includegraphics[width=0.5\textwidth]{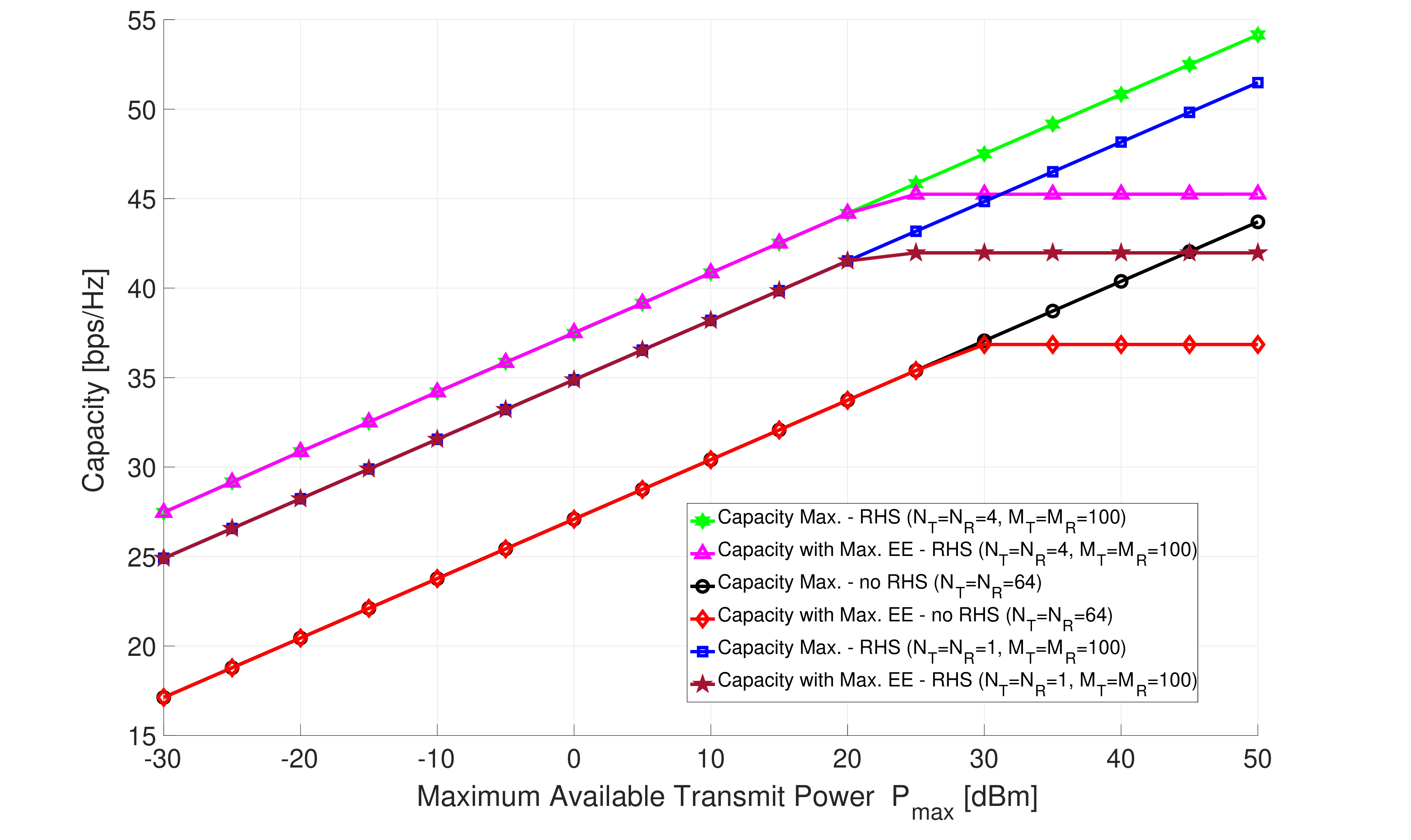}\caption{Capacity versus Maximum Available Transmit Power $P_{max}$ for Single Stream.} \label{fig:SE}
\end{figure}

Next, Fig.~\ref{fig:EH} shows the EE obtained by Algorithm \ref{Alg:AO_SIMO} with $M_{T}=M_{R}=100$ and $N_{T}=N_{R}=1,4$, versus the hardware power consumption per radio-frequency antenna chain, considering the single-stream scenario, and compares it to the EE achieved when no WsRHSs are used, but $N_{T}=N_{R}=32,64,128$. The results confirm that the use of WsRHSs leads to a significant EE improvement compared to the use of a large number of digital antennas. This result was expected since the use of a single-stream transmission causes the fact that all considered beamforming schemes have a multiplexing gain of $1$, and thus only an array gain can be provided by the system. Moreover, it is seen that, for low values of hardware power consumption, using four digital antennas outperforms the use of a single digital antenna. However, the situation changes as the hardware power consumption per antenna increases, and eventually, using a single digital antenna provides the best EE level. 

\begin{figure}[!h]
	\centering
	\includegraphics[width=0.5\textwidth]{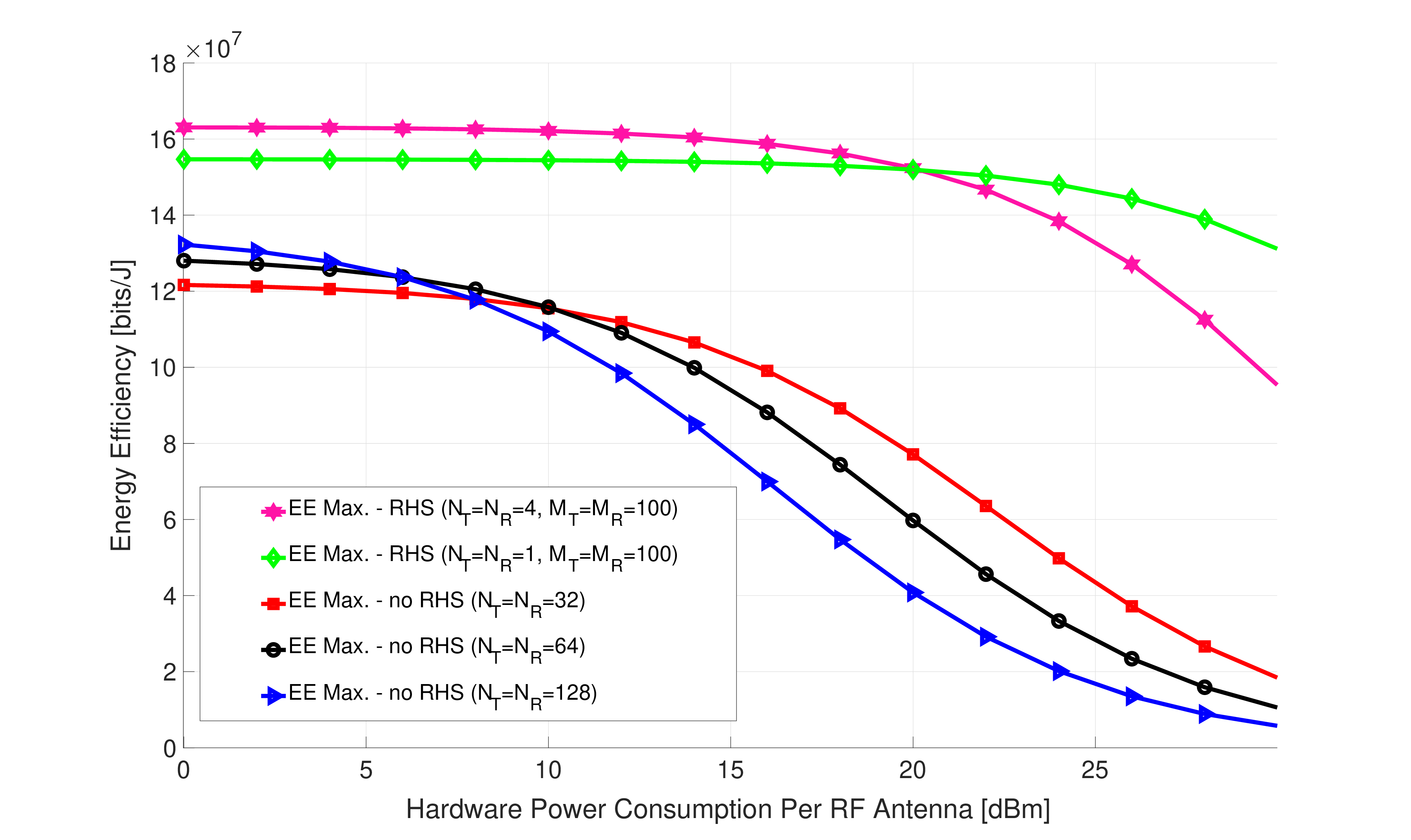}\caption{Achieved EE versus Hardware Power Consumption per RF Antenna for Single Stream.} \label{fig:EH}
\end{figure}

So far, we have presented results in which a data single-stream is transmitted, which means that no multiplexing gain is possible. As mentioned, this is always the case if either the transmitter or the receiver is a mobile user with a single-antenna, which is the typical assumption in massive MIMO networks. On the other hand, in several practical scenarios, both the transmitter and the receiver may be equipped with multiple antennas. In this case, a multiplexing gain is possible, and it is interesting to compare the EE and capacity of the system with and without WsRHSs, for different number of transmit and receive antennas. Specifically, Fig.~\ref{fig:EM} presents the achieved EE versus the maximum available transmit power $P_{max}$ for multi-stream transmissions, considering the following scenarios:
\begin{itemize}
\item Maximization of the EE by Algorithm \ref{Alg:AO_MIMO}, with $N_{T}=N_{R}=2$, and $M_{T}=M_{R}=64$.  
\item Maximization of the capacity by specializing Algorithm \ref{Alg:AO_MIMO} as described in Remark \ref{Rem:RateMax_MIMO}.
\item Maximization of the EE assuming the two WsRHSs are not present, and $N_{T}=N_{R}=4, 8, 16$. 
\item Maximization of the capacity assuming the two WsRHSs are not present, and $N_{T}=N_{R}=4, 8, 16$. 
\end{itemize}
The results indicate that using WsRHSs provides the highest EE, even if the system with WsRHSs has a multiplexing gain of $2$, while using a larger number of transmit/receive antennas provides a larger multiplexing gain, equal to $4$, $8$, or $16$. It is also seen that the EE gain diminishes for higher values of $P_{max}$, and the configuration with $16$ antennas performs similarly to the system with WsRHSs. This result indicates that while digital architectures can perform competitively at high transmit power, they appear to incur higher energy consumption due to the large number of RF chains, especially at low and moderate transmit power levels, despite the presence of a multiplexing gain.  Overall, these results confirm that the use of WsRHS at both the transmitter and receiver leads to significantly better EE in low/moderate power regimes, and can maintain competitive EE even as the transmit power increases.

\begin{figure}[!h]
	\centering
	\includegraphics[width=0.5\textwidth]
    {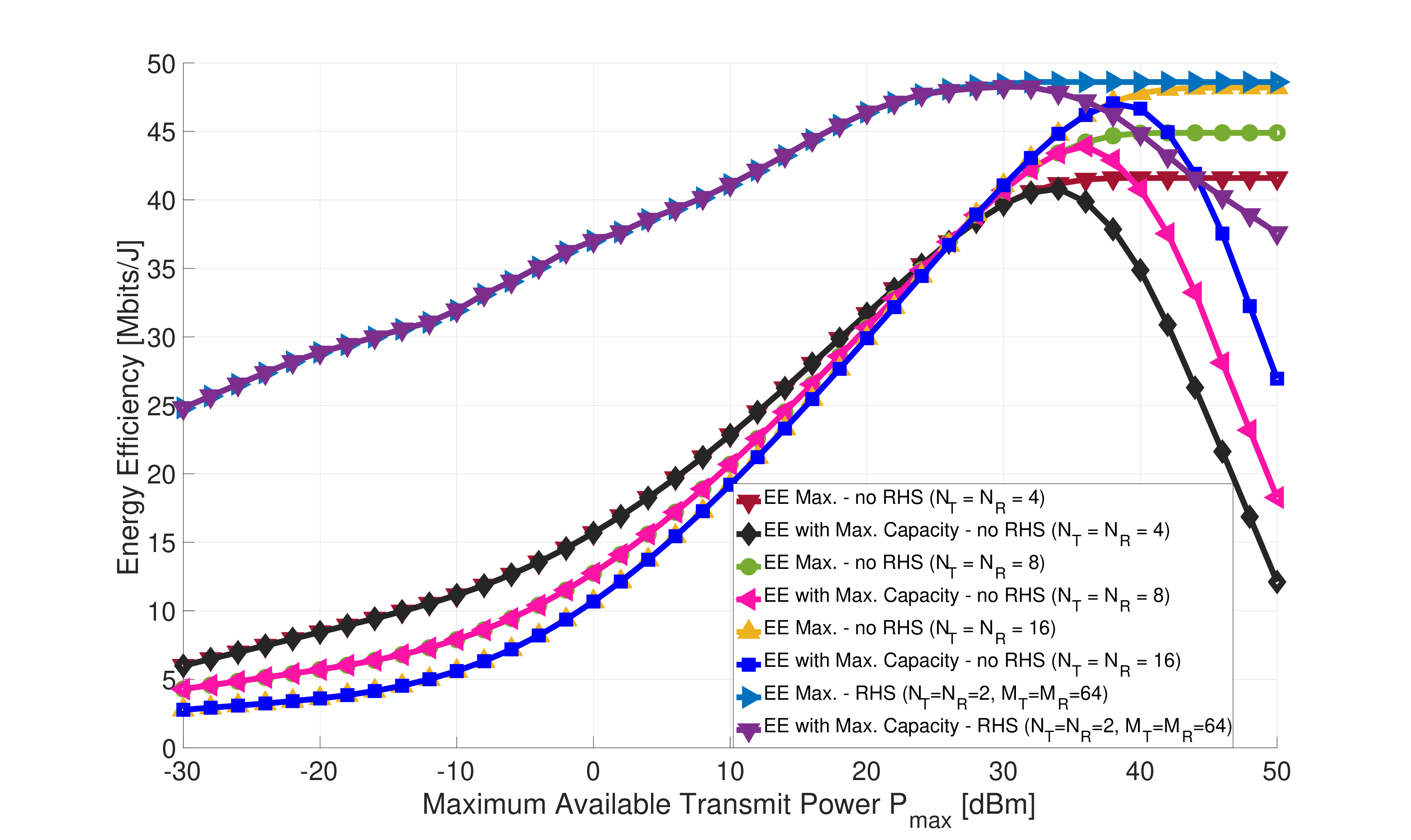}\caption{Achieved EE versus Maximum Available Transmit Power $P_{max}$ for Multiple Stream} \label{fig:EM}
\end{figure}

Fig.~\ref{fig:SM} considers the same resource allocation schemes as  Fig.~\ref{fig:EM}, but shows the achieved capacity instead of EE. In this case, the system with WsRHSs provides better capacity levels than the fully digital architectures only at very low power levels. Instead, at higher power levels, the WsRHS-based architecture suffers a significant gap compared to the fully digital architectures, due to the higher multiplexing gain provided by the use of a larger number of transmit/receive antennas. This results was expected since the energy cost required to provide a higher multiplexing gain is not accounted if capacity is the goal of the resource allocation. Thus, if capacity maximization is the objective of the resource allocation process, without any concerns for EE or architecture complexity, then digital beamforming provides better results.  

\begin{figure}[!h]
	\centering
	\includegraphics[width=0.5\textwidth]
    {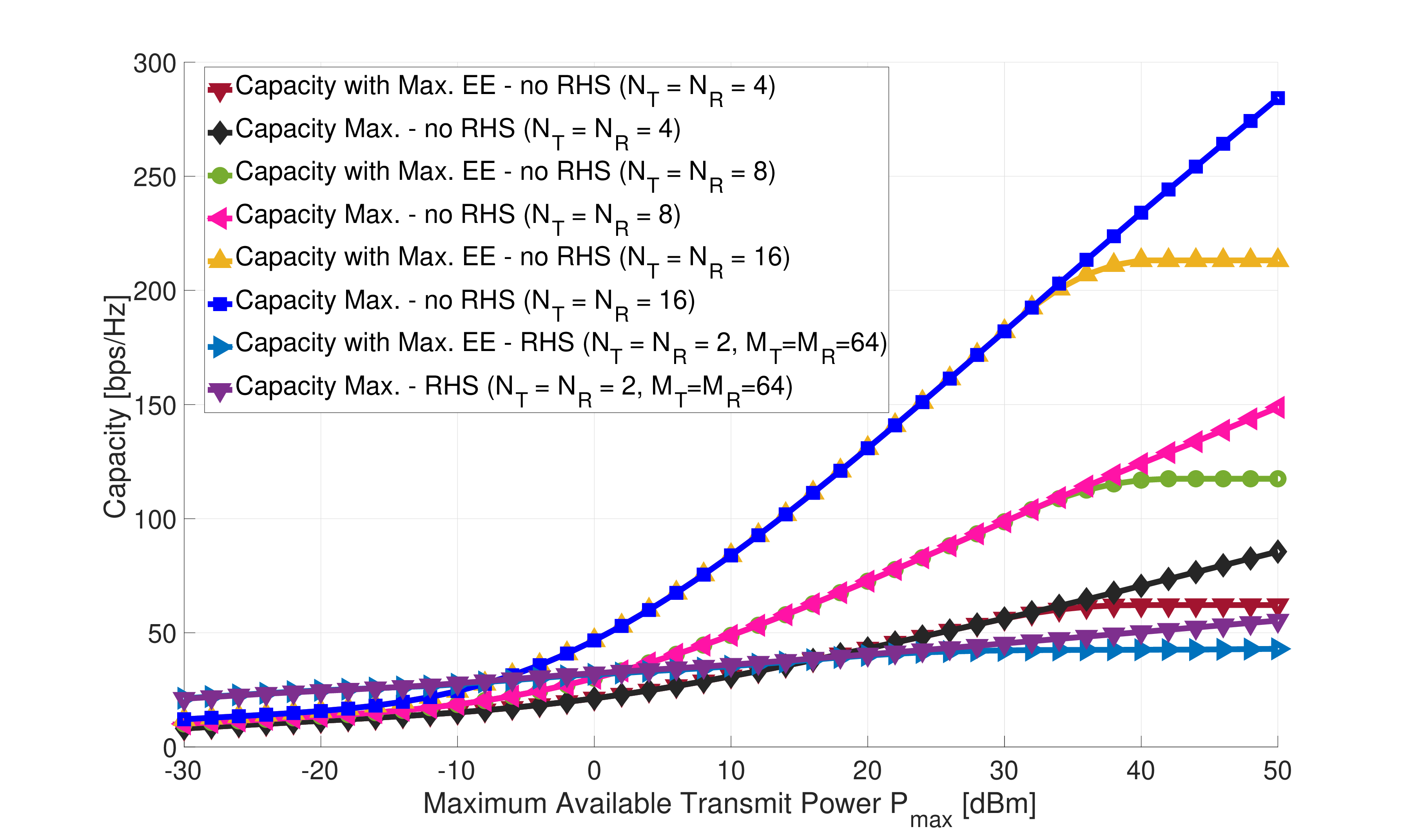}\caption{Capacity versus Maximum Available Transmit Power $P_{max}$ for Multiple Stream} \label{fig:SM}
\end{figure}




Lastly, Fig.~\ref{fig:EE_000} parallels Fig. \ref{fig:EH} for the multi-stream scenario. Specifically, it shows the EE obtained by Algorithm \ref{Alg:AO_MIMO} with $M_{T}=M_{R}=64$ and $N_{T}=N_{R}=2$ versus the hardware power consumption per RF antenna, and compares it to the EE achieved when no WsRHSs are used, but $N_{T}=N_{R}=4,8,16$. The comparison is made for three values of $P_{max}$, namely $P_{max}=-10,0,30\,\textrm{dBm}$. The results show that, for $P_{max}=-10\,\textrm{dBm}$, the use of WsRHSs provides much higher EE compared to the use of digital beamforming, despite the fact that, in the multi-stream scenario, a multiplexing gain equal to the number of digital antennas is present. Thus, for low values of $P_{max}$, even when $N_{T}=N_{R}=16$, i.e. with a multiplexing gain of $16$, digital beamforming provides a lower EE than that obtained by the WsRHS-based solution, in which $N_{T}=N_{R}=2$, i.e. with a multiplexing gain of $2$, over the complete range of hardware power that was considered. Instead, when $P_{max}=0\,\textrm{dBm}$, the EE of the WsRHS-based solution is still higher than that obtained by digital beam-forming for $N_{T}=N_{R}=4,8$, but a crossing point is observed when $N_{T}=N_{R}=16$. Thus, for intermediate values of $P_{max}$, the WsRHS-based solution outperforms fully digital beamforming when the per-antenna power consumption is limited. Finally, for large values of $P_{max}$, e.g. $P_{max}=30\,\textrm{dBm}$, the WsRHS-based solution is outperformed by digital beamforming, unless the per-antenna hardware power consumption is very large. 

\begin{figure*}[tb]
  \begin{center}
    \begin{tabular}{c}
        \begin{minipage}{0.33\hsize}
        \begin{center}
          \includegraphics[width=1.0\linewidth]{./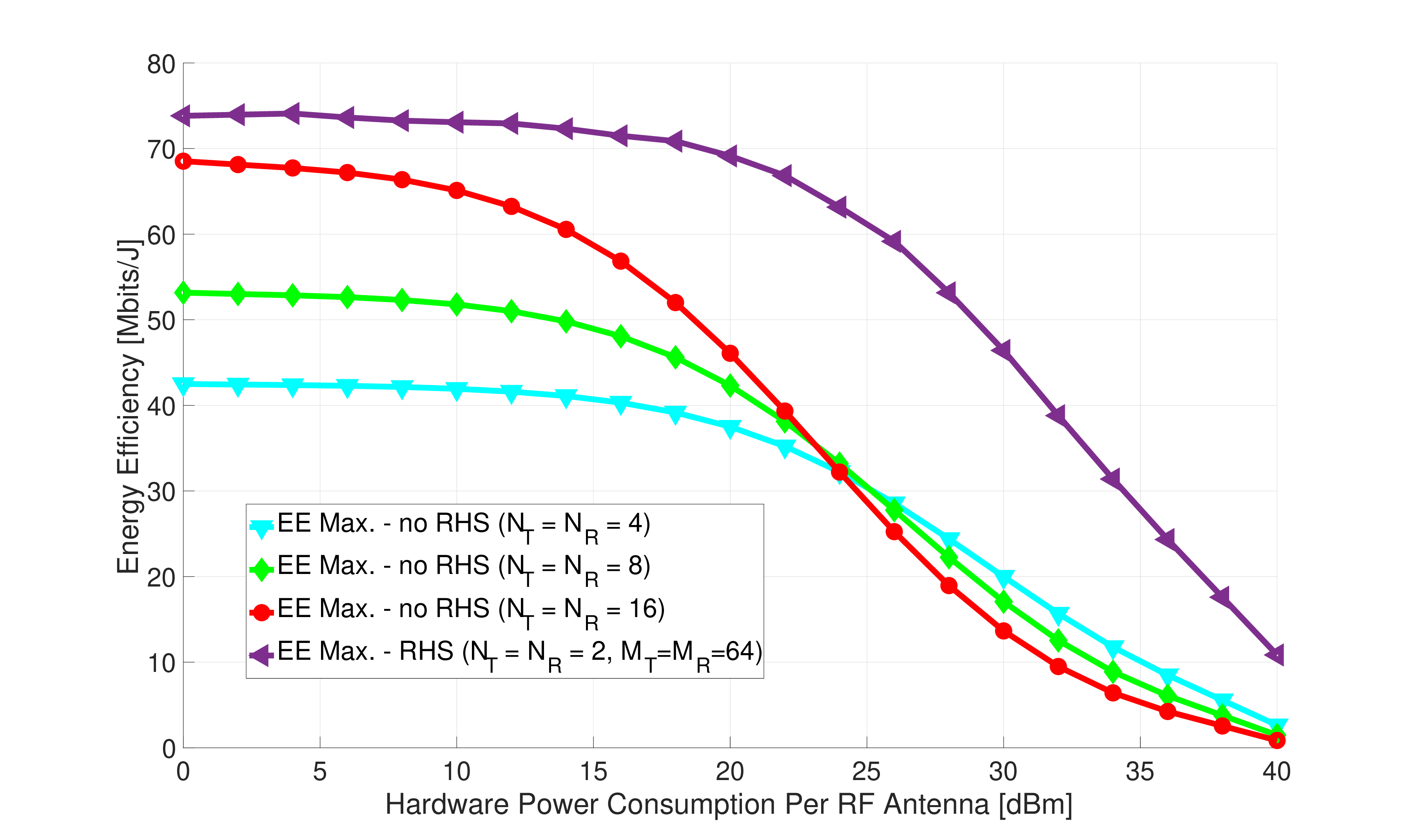}
          \hspace{0.1cm} (a) $P_{max}=-10\,\textrm{dBm}$
        \end{center}
      \end{minipage}

      \begin{minipage}{0.33\hsize}
        \begin{center}
          \includegraphics[width=1.0\linewidth]{./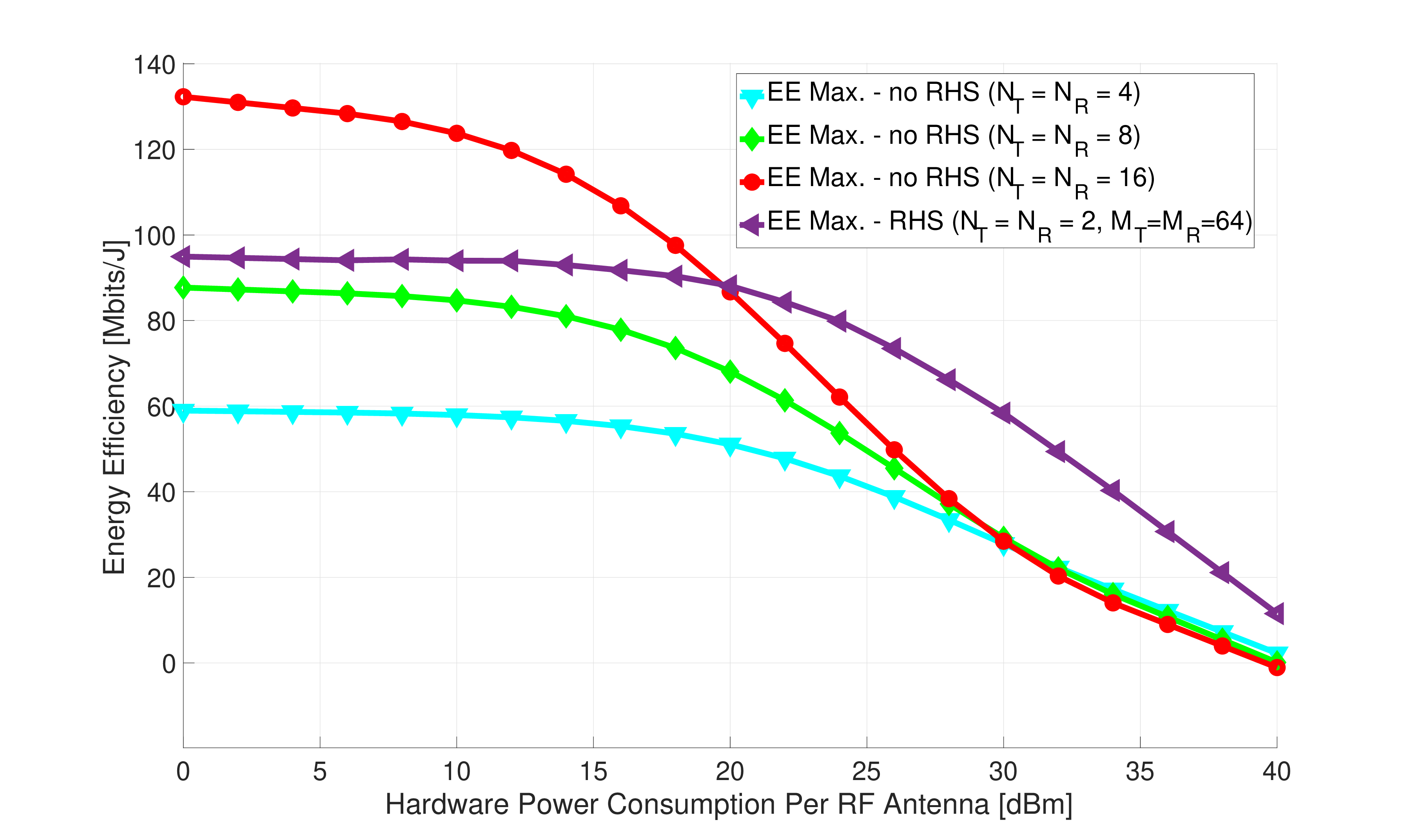}
          \hspace{0.1cm} (b) $P_{max}=0\,\textrm{dBm}$
        \end{center}
      \end{minipage}

      \begin{minipage}{0.33\hsize}
        \begin{center}
          \includegraphics[width=1.0\linewidth]{./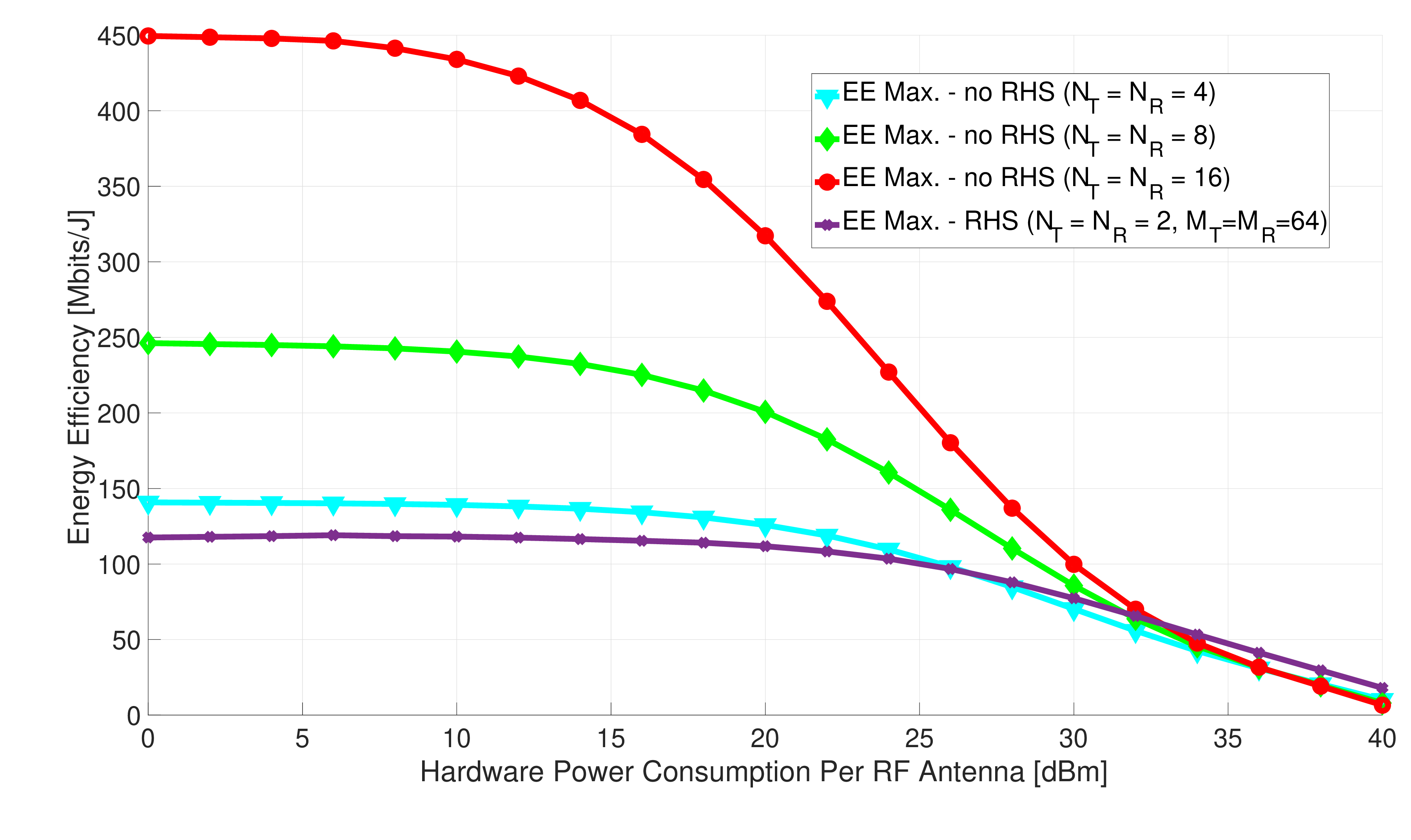}
          \hspace{0.1cm} (c) $P_{max}=30\,\textrm{dBm}$
        \end{center}
      \end{minipage}

      \begin{minipage}{0.06\hsize}
        \vspace{5mm}
      \end{minipage}
    \end{tabular}
    \caption{Achieved EE versus Hardware Power Consumption per RF Antenna for Multiple Streams.}
    \label{fig:EE_000}
  \end{center}
\end{figure*}

\section{Conclusion} \label{Sec:Concl}
\noindent This work has considered the problem of EE maximization in a MIMO wireless link aided by WsRHSs with global reflection constraints. A closed-form solution to the EE maximization problem has been obtained for the single-antenna scenario, while provably convergent numerical algorithms have been provided in the multiple-antenna case, with single-stream and multi-stream transmissions. The analysis shows that, in the single-stream scenario, the use of WsRHSs allows significantly reducing the number of digital antennas, while providing a large EE gain. On the other hand, when a multiplexing gain is possible and multiple data-streams are transmitted, the WsRHS-based solution with a low number of digital antennas achieves a significant EE gain for low and intermediate values of transmit powers, despite the lower multiplexing gain. In this context, a future line of investigation is the consideration of system optimization in the electromagnetic domain \cite{Wasif25}, which, however, leads to more complicated system models. 
 
\bibliographystyle{IEEEtran}
\bibliography{Biblio}

\end{document}